\newcommand{\N}{\mathbb{N}}
\newcommand{\Z}{\mathbb{Z}}
\newcommand{\1}{\mathbb{1}}
\newcommand{\0}{\mathbb{0}}
\renewcommand{\O}{\mathcal{O}}
\newcounter{tmp}
\newcounter{lemmalarge}
\newcounter{mytheorem2}
\newcounter{lemmayn2wave}
\title{Emergence of wave patterns on Kadanoff Sandpiles\thanks{Partially supported by  IXXI (Complex System Institute, Lyon) and ANR projects Subtile, MODMAD, Dynamite and QuasiCool (ANR-12-JS02-011-01).}}
\titlerunning{Emergence of wave patterns on Kadanoff Sandpiles}
\author{K\'evin Perrot\inst{1,}\inst{2} \and \'Eric R\'emila\inst{3}}
\authorrunning{K. Perrot and \'E. R\'emila}
\institute{
Universit\'e de  Lyon - LIP - (umr 5668 - CNRS - ENS de Lyon - Universit\'e Lyon 1) - 46 all\'e d'Italie 69364 Lyon Cedex 7, France
\and
Universit\'e Nice Sophia Antipolis - Laboratoire I3S - UMR 6070 CNRS - 2000 route des Lucioles, BP 121, F-06903 Sophia Antipolis Cedex, France\\
\and
Universit\'e de  Lyon - Groupe d'Analyse de la Th\'eorie Economique  Lyon Saint-Etienne - (umr  5824 - CNRS - Universit\'e Lyon 2) - Site st\'ephanois, 6 rue Basse des Rives, 42 023 Saint-Etienne Cedex 2, France \\
\email{kevin.perrot@ens-lyon.fr \hspace{1cm} eric.remila@univ-st-etienne.fr}
}
\begin{document}

\maketitle

\begin{abstract}
Emergence is a concept that is easy to exhibit, but very hard to formally handle. This paper is about cubic sand grains moving around on nicely packed columns in one dimension (the physical sandpile is two dimensional, but the support of sand columns is one dimensional). The Kadanoff Sandpile Model is a discrete dynamical system describing the evolution of a finite number of stacked grains ---as they would fall from an hourglass--- to a stable configuration (fixed point). Grains move according to the repeated application of a simple local rule until reaching a fixed point. The main interest of the model relies in the difficulty of understanding its behavior, despite the simplicity of the rule. In this paper we prove the emergence of wave patterns periodically repeated on fixed points. Remarkably, those regular patterns do not cover the entire fixed point, but eventually emerge from a seemingly highly disordered segment. The proof technique we set up associated arguments of linear algebra and combinatorics, which interestingly allow to formally state the emergence of regular patterns without requiring a precise understanding of the chaotic initial segment's dynamic.\vspace{.1cm}\\
\noindent \textbf{Keywords.} sandpile model, discrete dynamical system, emergence, fixed point.
\end{abstract}


\section{Introduction}\label{s:introduction}

Understanding and proving properties on discrete dynamical systems (DDS) is challenging, and demonstrating the global behavior of a DDS defined with local rules is at the heart of our comprehension of natural phenomena \cite{weaver,grauwin}. Sandpile models are a class of DDS defined by local rules describing how grains move in discrete space and time. We start from a finite number of stacked grains ---in analogy with an hourglass\footnote{After reading the definition Section \ref{ss:definition}, see Appendix \ref{a:hourglass} for details.}---, and try to predict the asymptotic shape of stable configurations. 

Bak, Tang and Wiesenfeld introduced sandpile models as systems presenting {\em self-organized criticality} (SOC), a property of dynamical systems having critical points as attractors \cite{bak88}. Informally, they considered the repeated addition of sand grains on a discretized flat surface. Each addition possibly triggers an avalanche, consisting of grains falling from column to column according to simple local rules, and after a while a heap of sand has formed. SOC is related to the fact that a single grain addition on a stabilized sandpile has a hardly predictable consequence on the system, on which fractal structures may emerge \cite{creutz96}. This model can be naturally extended to any number of dimensions.

%
\subsection{Kadanoff Sandpile Model (KSPM)}\label{ss:definition}

A one-dimensional sandpile configuration can be represented as a sequence $(h_i)_{i \in \N}$ of non-negative integers, $h_i$ being the number of sand grains stacked on column $i$. The evolution starts from the initial configuration $h$ where $h_0=N$ and $h_i=0$ for $i > 0$, and in the classical sandpile model a grain falls from column $i$ to column $i+1$ if and only if the height difference $h_i - h_{i+1} > 1$. One-dimensional sandpile models were well studied in recent years \cite{goles93,durandlose98,phan04,formenti07,phan08,PSSPM,formenti11}.

\begin{wrapfigure}{r}{4.4cm}
  \centering \includegraphics{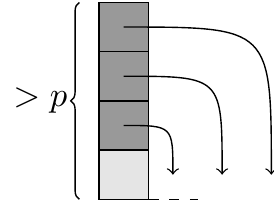}
  \caption{KSPM($p$) rule. When $p$ grains leave column $i$, the slope $b_{i-1}$ is increased by $p$, $b_i$ is decreased by $p+1$ and $b_{i+p}$ is increased by 1. The slope of other columns are not affected.}
  \label{fig:rule}
\end{wrapfigure}

Kadanoff {\em et al.} proposed a generalization of classical models in which a  fixed parameter $p$ denotes the number of grains falling at each step \cite{kadanoff89}. Starting from the initial configuration composed of $N$ stacked grains on column 0, we iterate the following rule: if the difference of height (the slope) between column $i$ and $i+1$ is greater than $p$, then $p$ grains can fall from column $i$, and one grain reaches each of the $p$ columns $i+1,i+2,\dots,i+p$ (Figure \ref{fig:rule}). The rule is applied once (non-deterministically) during each time step.

Formally, this rule is defined on the space of ultimately null decreasing integer sequences where each integer represents a column of stacked sand grains. Let $h=(h_i)_{i \in \mathbb N}$ denote a {\em configuration} of the model, $h_i$ is the number of grains on column $i$. The words {\em column} and {\em index} are synonyms. In order to consider only the relative height between columns, we represent configurations as sequences of {\em slopes} $b=(b_i)_{i \in \N}$, where for all $i \geq 0,~ b_i=h_i-h_{i+1}$. This latter is the main representation we are using (also the one employed in the definition of the model), within the space of ultimately null non-negative integer sequences. We denote by $0^\omega$ the infinite sequence of zeros that is necessary to explicitly write the value of a configuration.

\begin{definition}
  KSPM with parameter $p>0$, KSPM($p$), is defined by two sets:
  \begin{itemize}
    \item \emph{Configurations}. Ultimately null non-negative integer sequences.
    \item \emph{Transition rules}. There is a possible transition from a configuration $b$ to a configuration $b'$ on column $i$, and we note   $b \overset{i}{\rightarrow} b'$ when:
    \begin{multicols}{2}
    \begin{itemize}
      \item $b'_{i-1}=b_{i-1} + p$ (for $i \neq 0$)
      \item $b'_i = b_i - (p+1)$
      \item $b'_{i+p} = b_{i+p} + 1$
      \item $b'_j = b_j$ for $j \not\in  \{i-1, i, i+p \}$. 
    \end{itemize}
    \end{multicols}
  \end{itemize}
\end{definition}
In this case we say that $i$ is {\em fired}. For the sake of imagery, we always consider indices to be increasing on the right (Figure \ref{fig:rule}). Remark that according to the definition of the transition rules, $i$ may be fired if and only if $b_i > p$, otherwise $b'_i$ is negative. We  note $b \rightarrow b'$ when there exists an integer $i$ such that $b \overset{i}{\rightarrow} b'$. The transitive closure of $\rightarrow$ is denoted by  $\overset{*}{\rightarrow}$, and we say that $b'$ is {\em reachable} from $b$ when $b \overset{*}{\to} b'$. A basic property of the KSPM model is the \emph{diamond property} : if there exists $i$ and $j$ such that $b \overset{i}{\rightarrow} b'$ and $b \overset{j}{\rightarrow} b''$, then there exists a configuration $b'''$ such that $b' \overset{j}{\rightarrow} b'''$  and $b'' \overset{i}{\rightarrow} b'''$.

We say that a configuration $b$ is \emph{stable}, or a \emph{fixed point}, if no transition is possible from $b$. As a  consequence of the diamond property, one can easily check that, for each configuration $b$, there exists a unique stable configuration, denoted by $\pi(b)$, such that  $b \overset{*}{\rightarrow} \pi(b)$. Moreover, for any configuration $b'$ such that $b \overset{*}{\rightarrow} b'$, we have $\pi(b') = \pi(b)$ (see \cite{goles02} for details). For convenience, we denote by $N$ the initial configuration $(N,0^\omega)$, such that $\pi(N)$ is the sequence of slopes of the fixed point associated to the initial configuration composed of $N$ stacked grains. This paper is devoted to the study of $\pi(N)$ according to $N$. An example of evolution is pictured on figure \ref{fig:example}.

\begin{figure}[!h]
  \begin{center}
    \includegraphics[width=\textwidth]{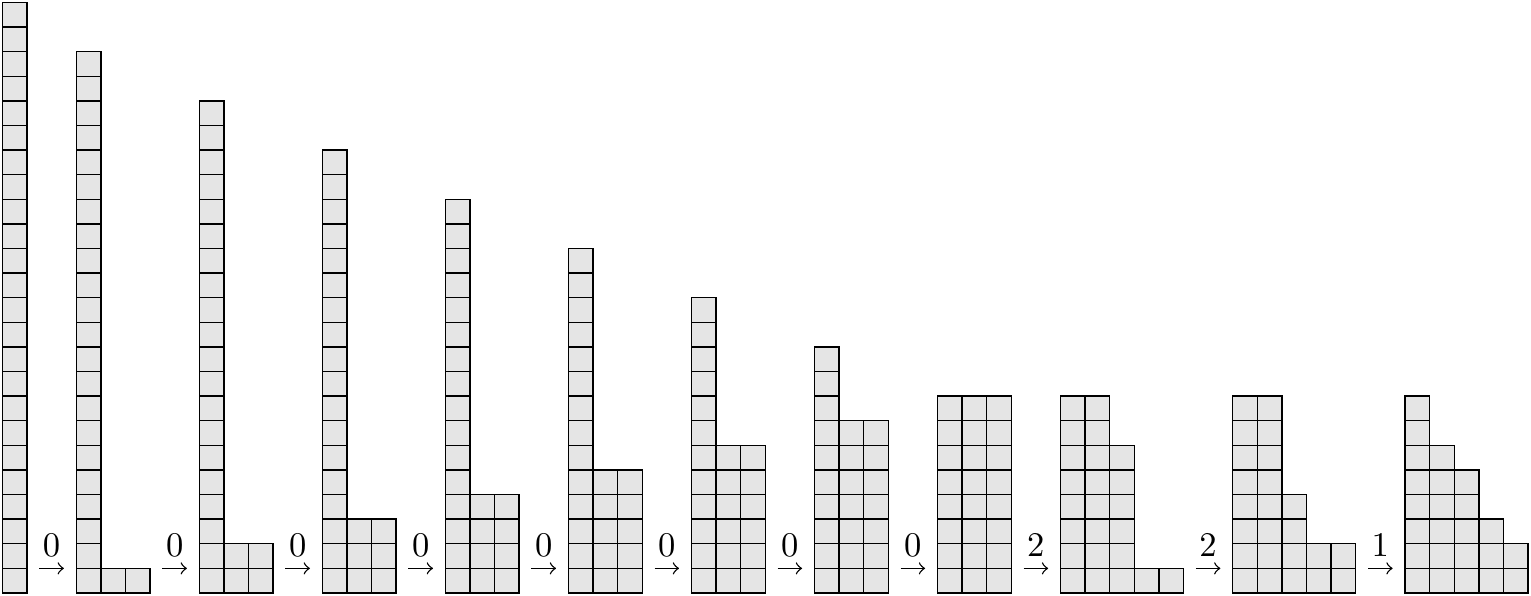}
  \end{center}
  \caption{A possible evolution in KSPM($2$) from the initial configuration for $N\!=\!24$ to $\pi(24)$. $\pi(24)=(2,1,2,1,2,0^\omega)$ and its shot vector (definition in Subsection \ref{ss:dds}) is $(8,1,2,0^\omega)$.}
  \label{fig:example}
\end{figure}

%
\subsection{Our result}

For a configuration $b$ we denote by $b_{[n,\infty[}$ the infinite subsequence of $b$ starting from index $n$, and $^*$ is the Kleene star denoting finite repetitions of a regular expression (see for example \cite{hopcroft} for details). In this paper we prove the following precise asymptotic form of fixed points, presenting an emergent regular structure stemming from a seemingly complex initial segment (note that the support of $\pi(N)$ is in $\Theta(\sqrt{N})$, see Appendix \ref{a:support} for details):

\setcounter{mytheorem2}{\value{theorem}}
\begin{theorem}\label{theorem:main}
  There exists an $n$ in $\O(\log N)$ such that
  $$\pi(N)_{[n,\infty[} \in (p \cdot \ldots \cdot 2 \cdot 1)^* \,0\, (p \cdot \ldots \cdot 2 \cdot 1)^*\,0^\omega.$$
\end{theorem}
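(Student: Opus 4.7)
The plan is to work in the shot vector representation and reduce the theorem to a statement about a constrained linear recurrence. Define the \emph{shot vector} $a = (a_i)_{i \in \N}$ where $a_i$ is the number of firings at column $i$ during any evolution from $(N,0^\omega)$ to $\pi(N)$; by the diamond property $a$ is well-defined independently of the firing order. Tracking the effect of each firing on the slope sequence yields, for $i \geq 1$,
$$\pi(N)_i \;=\; p\,a_{i+1} - (p+1)\,a_i + a_{i-p},$$
with the convention $a_j = 0$ for $j<0$, together with $\pi(N)_0 = N - (p+1)\,a_0 + p\,a_1$. Since $\pi(N)$ is a fixed point, each coordinate lies in $\{0,1,\ldots,p\}$, so the problem becomes: classify non-negative integer sequences $a$ whose image under this linear operator is componentwise in $\{0,\dots,p\}$ and takes the prescribed value $N$ at the origin.

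I would then characterize the wave pattern $(p,p-1,\ldots,1)$ entirely in terms of $a$. Substituting the wave values into the recurrence and summing blocks of $p$ consecutive equations should produce telescoping identities that pin down the differences of $a$ on any window where the wave holds; concretely, I expect a wave block of length $p$ to be equivalent to a specific affine template for $a$ on a surrounding window of length $2p+1$. This converts the qualitative claim ``the wave pattern appears at position $n$'' into a clean algebraic certificate that can be checked directly from $a$.

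The core step is then a propagation argument. If the certificate holds on a long enough window of $a$, the constraints $\pi(N)_{n+k} \in \{0,\ldots,p\}$, together with the recurrence and the non-negativity and integrality of $a$, should force the certificate to extend one step to the right, allowing at most one exceptional index at which $\pi(N)$ is forced to take the value $0$. This exceptional index accounts for the isolated $0$ in the expression $(p\cdots 1)^*\,0\,(p\cdots 1)^*\,0^\omega$ of Theorem~\ref{theorem:main}: it is the column where the residue of $N$ modulo the ``mass per wave'' must be absorbed, and its position is dictated by an integrality/residue condition on $N$. A boundary analysis at the right end of the support of $\pi(N)$ then produces the trailing $0^\omega$, the support size $\Theta(\sqrt{N})$ being already known.

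The main obstacle is the bound $n = \O(\log N)$ on the chaotic prefix. Here I would introduce a potential on finite prefixes of $a$ measuring the defect from the wave certificate and attempt to show it decreases geometrically as $i$ grows. The most natural source of contraction is the spectral structure of the recurrence solved for $a_{i+1}$, whose characteristic polynomial has the wave as a fixed mode, with all other modes expected to lie strictly inside the unit disk. Since the initial defect is polynomially bounded in $N$ (by a trivial bound $a_0 = \O(N)$), geometric contraction would immediately give $n = \O(\log N)$. Combining this linear-algebraic contraction with the integrality and non-negativity of $a$, so that the geometric decay upgrades from ``proximity'' to actual equality with the wave template, is where I expect the technical heart of the argument to lie, and is precisely the mixture of linear algebra and combinatorics announced in the abstract.
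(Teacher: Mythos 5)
Your overall strategy for the bulk of the theorem --- the linear relation $\pi(N)_i = p\,a_{i+1}-(p+1)\,a_i+a_{i-p}$ on the shot vector, a spectral contraction argument to bound the chaotic prefix by $\O(\log N)$, and the use of integrality and non-negativity to upgrade geometric proximity into exact equality with the wave template --- is essentially the route the paper takes: it packages the recurrence as a system on $\Z^{p+1}$, changes basis to differences of the shot vector and projects to obtain an ``averaging system'' on $\Z^p$ whose non-trivial eigenvalues are bounded by $\frac{p-1}{p}$ (Enestr\"om--Kakeya), then combines exponential large-scale contraction with a linear discrete decrease. One technical point you should not gloss over: the characteristic polynomial of the recurrence is $(1-x)^2R(x)$, so $1$ is a \emph{double} eigenvalue with a Jordan block of size $2$; the paper explicitly reports that working in Jordan normal form made the unknown bounded perturbation $\frac{b_i}{p}$ intractable, and its change of basis is designed precisely so that the contraction and the perturbation ``act harmoniously''. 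Your proposed defect potential would have to solve that same problem, which you correctly flag as the technical heart but do not resolve.

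The genuine gap is the final step. Your local propagation argument --- a certificate on a window plus integrality and the constraints $\pi(N)_{n+k}\in\{0,\dots,p\}$ forces the certificate to extend, ``allowing at most one exceptional index'' where the slope is $0$ --- cannot work as stated. The static constraints you list are satisfied by \emph{any} interleaving of $0$'s and wave blocks: whenever the shot-vector difference window is uniform, the next slope may be either $0$ (leaving it uniform) or $p$ (launching a wave and returning to a uniform window $p$ steps later), and nothing local distinguishes one $0$ from several. This is exactly why the paper's static analysis only yields the weaker form $\pi(N)_{[n,\infty[}\in\bigl(0+(p\cdots 1)\bigr)^*0^\omega$, and why it then switches to a \emph{dynamic} induction on $N$: using the avalanche structure (each column fired at most once, density of the $N$-th avalanche beyond a logarithmic column), it shows that an avalanche entering the wave region must stop at the first $0$ it meets, shifting that $0$ one wave to the left and preserving the invariant ``at most one $0$ among the waves''. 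Your appeal to a residue condition on $N$ dictating the position of a single exceptional $0$ does not substitute for this: the uncontrolled logarithmic prefix can absorb any residue, so no global congruence pins down the number of $0$'s in the tail. You need the avalanche induction, or some other genuinely dynamic or global argument, to exclude two or more $0$'s between wave patterns.
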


An example for $\pi(2000)$ is given on Figure \ref{fig:2000-h} of Appendix \ref{a:2000}.

The result above  presents  an interesting feature: we asymptotically completely describe the form of stable configurations, though there is a part of asymptotically null relative size which remains mysterious. Furthermore, proven regularities are directly stemming from this messy part. Informally, it means that we prove the emergence of  a very regular behavior, after a short transitional and complex phase. Most interestingly, the proof technic we develop does not require to understand precisely this complex initial segment.

In some previous works (\cite{LATA,MFCS}), we obtained a similar result for the smallest parameter $p= 2$ (the case $p=1$ is the well known Sandpile Model) using arguments of combinatorics, but, for the general case, we have to introduce a completely different approach. 
The main ideas are the following: we first relate different representations of a sandpile configuration (Subsection \ref{ss:dds}), which leads to the construction of a DDS on $\Z^{p+1}$ such that the orbit of a well chosen point (according to the number of grains $N$) describes the fixed point configuration we want to characterize. This system is quasi-linear in the sense that the image of a point is obtained by a linear contracting transformation followed by a rounding (in order to remain in $\Z^{p+1}$) which we do not precisely predict. We want to prove that this system converges rapidly, but the unknown rounding makes the analysis of the system very difficult (except for $p = 2$). The key point (Subsection \ref{ss:version}) is the reduction of this system to another quasi-linear system in $\Z^{p}$, for which we have a clear intuition (Subsection \ref{ss:study}), and which allows to conclude the convergence of the system to points involving wavy patterns on fixed points (Subsections \ref{ss:wave} and \ref{ss:proof}).

%
\subsection{The context}

 The problem of describing and proving regularity properties suggested by numerical simulations, for models issued from  basic dynamics is a present challenge for physicists, mathematicians, and computer scientists. There exist a lot of conjectures on discrete dynamical systems with simple local rules (sandpile model \cite{dartois} or chip firing games, but also  rotor router  \cite{levine},  the famous Langton's ant \cite{gajardo,propp}...)  but very few results have actually been proved. Regarding KSPM(1), the {\em prediction problem} (namely, the problem of computing the fixed point $\pi(k)$) has been proven in \cite{moore99} to be in \textbf{NC}$^2$ $\subseteq$ \textbf{AC}$^2$ for the one dimensional case, the model of our purpose (improved to \textbf{LOGCFL} $\subseteq$ \textbf{AC}$^1$ in \cite{miltersen}), and \textbf{P}-complete when the dimension is $\geq 3$.  A recent study \cite{goles10} showed that in the two dimensional case the avalanche problem (given a configuration $\sigma$ and two columns $i$ and $j$, does adding one grain on column $i$ have an influence on columnn $j$?) is \textbf{P}-complete for KSPM($p$) with $p>1$, which points out an inherently sequential behavior. The two dimensional case for $p=1$ is still open, though we know from \cite{2006-Goles-CrossingInfo2DSandpile} that wires cannot cross.


\section{Analysis}\label{s:snowball}

We consider the parameter $p$ to be fixed. We study the ``{\em internal dynamic}'' of fixed points, via the construction of a DDS in $\Z^{p+1}$, such that the orbit of a well chosen point according to the number of grains $N$ describes $\pi(N)$. The aim is then to prove the convergence of this orbit in $\O(\log N)$ steps, such that the values it takes involve the form described in Theorem \ref{theorem:main}.

%
\subsection{Internal dynamic of fixed points}\label{ss:dds}


A useful representation of a configuration reachable from $(N,0^\omega)$ is its {\em shot vector} $(a_i)_{i \in \N}$, where $a_i$ is the number of times that the rule has been applied on column $i$ from the initial configuration (see figure \ref{fig:example} for an example). A fixed point $\pi(N)$ can also be represented as a sequence of slopes $(b_i)_{i \in \N}$ ({\em i.e.}, $b_i=\pi(N)_i$ for all $i$), and those two representations are obviously linked in various ways. In particular for any $i$ we can compute the slope at index $i$ provided the number of firings at $i-p$, $i$ and $i+1$, because $b_i$ is initially equal to 0 (the case $i=0$ is discussed below), and: a firing at $i-p$ increases $b_i$ by 1; a firing at $i$ decreases $b_i$ by $p+1$; a firing at $i+1$ increases $b_i$ by $p$; and any other firing has no consequence on the slope $b_i$. Therefore, $b_i = a_{i-p}  - (p+1) \, a_i + p \, a_{i+1}$, with $ 0 \leq b_i  \leq p$ since $\pi(N)$ is a fixed point, and thus
$$a_{i+1} =   -\frac{1}{p}\, a_{i-p} + \frac {p+1}{p}\, a_i +  \frac {1}{p}\, b_i$$

This equation expresses the value of the shot vector at position $i\!+\!1$ according to its values at positions $i\!-\!p$ and $i$, and a bounded perturbation $0 \leq \frac{b_i}{p} \leq 1$. As an initial condition, we consider a virtual column of index $-p$ that has been fired $N$ times: $a_{-p}=N$ and $a_{i}=0$ for $-p<i<0$, representing the fact that column 0 is the only one receiving $N$ times 1 unit of slope.

\begin{remark}\label{remark:determined}
Note that $a_{i+1} \in \N$, thus $-a_{i-p}+(p+1)\, a_i+b_i \equiv 0 \mod p$. As a consequence, the value of $b_i$ is {\em nearly determined}: given $a_{i-p}$ and $a_i$, there is only one possible value of $b_i$, except when $-a_{i-p} + (p+1)\, a_i \equiv 0 \mod p$ in which case $b_i$ equals $0$ or $p$.
\end{remark}

For example, consider $\pi(2000)$ for $p=4$ (see Appendix \ref{a:2000}). We have $a_8=120$ and $a_4=189$, so $-a_4+5\, a_8=411 \equiv 3 \mod p$. From this knowledge, $b_8$ is determined to be equal to $1$, so that $a_9=-\frac{1}{4}\, a_4 + \frac{5}{4}\, a_8 +  \frac {1}{4}\, b_8=103$ is an integer.

We rewrite this relation as a linear system we can manipulate easily. $a_{i+1}$ is expressed in terms of $a_{i-p}$ and $a_i$, so we choose to construct a sequence of vectors $(X_i)_{i \in \N}$ with $X_i \in \N^{p+1}$ and such that $X_i =\,^t(a_{i-p}, a_{i-p+1}, \dots, a_i)$ where $^tv$ stands for the transpose of $v$. Note that we consider only finite configurations, so there always exists an integer $i_0$ such that $X_i=\0$ for $i_0 \leq i$, with $\0=\,^t(0, \dots, 0)$.

Given $X_i$ and $b_i$ we can compute $X_{i+1}$ with the relation
$$X_{i+1}=A \, X_i + \frac {b_i}{p} J \hspace{1cm}\text{with}\hspace{1cm} A = \begin{pmatrix} 0 & 1 & & 0 & 0\\ & & \ddots & & \\ 0 & 0 & & 1 & 0\\ 0 & 0 & & 0 & 1\\ -\frac{1}{p} & 0 & & 0 & \frac {p+1}{p} \end{pmatrix} \hspace{.5cm} J=\begin{pmatrix}0\\\vdots\\0\\0\\1\end{pmatrix}$$
in the canonical base $B=(e_0,e_1,\dots,e_p)$, with $A$ a square matrix\footnote{As a convention, blank spaces are 0s and dotted spaces are filled by the sequence induced by its endpoints.} of size $(p+1) \times (p+1)$.

This system expresses the shot vector around position $i+1$ (via $X_{i+1}$) in terms of the shot vector around position $i$ (via $X_i$) and the slope at $i$ (via $b_i$). Thus the orbit of the point $X_0=(N,0,\dots,0,a_0)$ in $\N^{p+1}$ describes the shot vector of the fixed point composed of $N$ grains.

Note that it may look odd to study the sequence $(b_i)_{i \in \N}$ using a DDS whose iterations presuppose the knowledge of $(b_i)_{i \in \N}$. It is actually helpful because of the underlined fact that values $b_i$ are {\em nearly determined} (Remark \ref{remark:determined}): in a first phase we will make no assumption on the sequence $(b_i)_{i \in \N}$ (except that $b_i \leq p$ for all $i$) and prove that the system converges exponentially quickly in $N$; and in a second phase we will see that from an $n$ in $\O(\log N)$ such that the system has converged, the sequence $(b_i)_{i \geq n}$ is {\em determined} to have a regular wavy shape.

The system we get is a linear map plus a perturbation induced by the discreteness of values of the slope. Though the perturbation is bounded by a global constant at each step ($b_i \leq p$ for all $i$ since $\pi(N)$ is a fixed point), it seems that the non-linearity prevents classical methods to be powerful enough to decide the convergence of this model.

We denote by $\phi$ the corresponding transformation from $\Z^{p+1}$ to $\Z^{p+1}$, which is composed of two parts: a matrix and a perturbation. Let $R(x)=x^{p-1} + \frac{p-1}{p}\, x^{p-2} + \dots + \frac{2}{p}\, x+ \frac{1}{p}$, the characteristic polynomial of $A$ is $(1-x)^2\, R(x)$. We can first notice that $1$ is a double eigenvalue. A second remark, which helps to get a clear picture of the system, is that all the other eigenvalues are distinct and less than 1 from Lemma \ref{lemma:roots} (using a bound by Enerstr\"om and Kakeya \cite{enestromkakeya}, see Appendix \ref{a:roots}). We will especially use these remarks in Subsection \ref{ss:study}. Therefore there exists a basis such that the matrix of $\phi$ is in Jordan normal form with a Jordan block of size 2. Then, we could project on the $p-1$ other components to get a diagonal matrix for the transformation, hopefully exhibiting an understandably contracting behavior.

We tried to express the transformation $\phi$ in a basis such that its matrix is in Jordan normal form, but we did not manage to handle the effect of the perturbation expressed in such a basis. Therefore, we rather express $\phi$ in a basis such that the matrix and the perturbation act harmoniously. The proof of the Theorem \ref{theorem:main} is done in three steps:
\begin{enumerate}
  \item the construction of a new dynamical system: we first express $\phi$ is a new basis $B'$, and then project along one component (Subsection \ref{ss:version});
  \item the behavior of this new dynamical system is easily tractable, and we will see that it converges exponentially quickly (in $\O(\log N)$ steps) to a uniform vector (Subsection \ref{ss:study});
  \item finally, we prove that as soon as the vector is uniform, then the wavy shape of Theorem \ref{theorem:main} takes place (Subsections \ref{ss:wave} and \ref{ss:proof}).
\end{enumerate}

%
\subsection{Making the matrix and the perturbation act harmoniously}\label{ss:version}

From the dynamical system $X_{i+1}=A \, X_i +\frac {b_i}{p} J$ in the canonical basis $B$, we construct a new dynamical system for $\phi$ in two steps: first we change the basis of $\Z^{p+1}$ in which we express $\phi$, from the canonical one $B$, to a well chosen $B'$; then we project the transformation along the first component of $B'$. The resulting system on $\Z^p$, called {\em averaging system}, is very easily understandable, very intuitive, and the proof of its convergence to a uniform vector can then be completed straightforwardly.

$$\text{Let }
B'=
\begin{pmatrix}
1 & 0 & & 0\\
1 & 1 & & 0\\
\vdots & \vdots & \ddots & \\
1 & 1 & \dots & 1
\end{pmatrix}
\text{ and }
B'^{-1}=
\begin{pmatrix}
  1 & & 0& 0\\
  -1 & \ddots & 0 & 0 \\
   & \ddots & \ddots & \\
   0 & & -1 & 1\\
\end{pmatrix}
\text{ be square matrices of size } p+1.
$$
$B'=(e'_0,\dots,e'_p)$ (with $e'_i$ the $(i+1)^{th}$ column of the matrix $B'$) is a basis of $\Z^{p+1}$, and we have
$$
\begin{array}[t]{rrcl}
& B'^{-1} \, X_{i+1} & = & B'^{-1} \, A \, B' \, B'^{-1} \, X_i + \frac{b_i}{p} B'^{-1} \, J\\
\iff &X'_{i+1} & = & A' \, X'_i + \frac{b_i}{p} J'
\end{array}
$$
with
$$
X'_i = B'^{-1} \, X_i
\hspace{1cm}
A' = B'^{-1} \, A \, B' =
\begin{pmatrix}
  1 & 1 & & 0\\
   & & \ddots & \\
   0 & 0 & & 1\\
   0 & \frac{1}{p} & \dots & \frac{1}{p}
\end{pmatrix}
\hspace{1cm}
J' = B'^{-1} \, J =
\begin{pmatrix}
0\\ \vdots \\ 0\\ 1
\end{pmatrix}$$

We now proceed to the second step by projecting along $e'_0$. Let $P$ denote the projection in $\Z^{p+1}$ along $e'_0$ onto $\{0\} \times \Z^p$. We can notice that $e'_0$ is an eigenvector of $A'$, hence projecting along $e'_0$ simply corresponds to erasing the first coordinate of $X'_i$. For convenience, we do not write the zero component of objects in $\{0\} \times \Z^p$.

The new DDS we now have to study, which we call {\em averaging system}, is
\begin{eqnarray}
  Y_{i+1} = M \, Y_i + \frac{b_i}{p} K \label{eq:averaging}
\end{eqnarray}
with the following elements in $\Z^p$ (in $\{0\} \times \Z^p$):
$$
Y_i = P \, X'_i
\hspace{1cm}
M=P \, A'=
\begin{pmatrix}
   0 & 1 & & 0\\
   & & \ddots & \\
   0 & 0 & & 1\\
   \frac{1}{p} & \frac{1}{p} & \dots & \frac{1}{p}
\end{pmatrix}
\hspace{1cm}
K=P \, J'=
\begin{pmatrix}
0\\ \vdots \\ 0\\ 1
\end{pmatrix}
$$

Let us look in more details at $Y_i$ and what it represents concerning the shot vector. We have $X_i=\,^t(a_{i-p},a_{i-p+1},\dots,a_i)$, thus
$$Y_i=P \, B'^{-1} \, X_i=
\begin{pmatrix}
a_{i-p+1}-a_{i-p}\\
\vdots\\
a_{i-1}-a_{i-2}\\
a_i-a_{i-1}
\end{pmatrix}
\text{ and for initialization }
Y_0=
\begin{pmatrix}
-N\\
0\\
\vdots\\
0\\
a_0
\end{pmatrix}.$$
$Y_i$ represents differences of the shot vector, which may of course be negative. In Subsection \ref{ss:study} we will see that the averaging system is easily tractable and converges exponentially to a uniform vector. Subsection \ref{ss:wave} concentrates on the implications following this uniform vector, {\em i.e.}, the emergence of a wavy shape.

%
\subsection{Convergence of the averaging system}\label{ss:study}

The averaging system is understandable in simple terms. From $Y_i$ in $\Z^p$, we obtain $Y_{i+1}$ by:
\begin{enumerate}
  \item shifting all the values one row upward;
  \item for the bottom component, computing the mean of values of $Y_i$, and adding a small perturbation (a multiple of $\frac{1}{p}$ between 0 and 1) to it.
\end{enumerate}

Let $y_i$ be the first component of $Y_i$, we therefore have $Y_i=\,^t(y_i,\dots,y_{i+p-1})$.

\begin{remark}\label{remark:determined2}
$(Y_i)_{i \in \N}$ are still integer vectors, hence the perturbation added to the last component is again nearly determined: let $m_i$ denote the mean of value of $Y_i$, we have $(m_i + \frac{b_i}{p}) \in \Z$ and $0 \leq \frac{b_i}{p} \leq 1$. Consequently, if $m_i$ is not an integer then $b_i$ is determined and equals $p(\lceil m_i \rceil - m_i)$, otherwise $b_i$ equals 0 or $p$.
\end{remark}

For example, consider $\pi(2000)$ for $p=4$ (see Figure \ref{fig:2000-dsv} of Appendix \ref{a:2000}, be careful that it pictures $a_i-a_{i+1}$ at position $i$). We have $Y_{13}=\,^t(-3,-5,-7,-7)$, then $y_{13}=-\frac{11}{2}$ and $b_{13}$ is forced to be equal to 2 so that $Y_{14}=\,^t(-5,-7,-7,-5)$ is an integer vector.

We can foresee what happens as we iterate this dynamical system and new values are computed: on a {\em large scale} ---when values are large compared to $p$--- the system evolves roughly toward the mean of values of the initial vector $Y_0$, and on a {\em small scale} ---when values are small compared to $p$--- the perturbation lets the vector wander a little around. Previous developments where intending to allow a simple argument to prove that those wanderings do not prevent the exponential convergence towards a uniform vector.

The study of the convergence of the averaging system works in three steps:
\begin{enumerate}
\item[(i).] state a linear convergence of the whole system; then express $Y_n$ in terms of $Y_0$ and $(b_i)_{0 \leq i \leq n}$;
\item[(ii).] isolate the perturbations induced by $(b_i)_{0 \leq i \leq n}$ and bound them;
\item[(iii).] prove that the other part (corresponding to the linear map $M$) converges exponentially quickly.
\end{enumerate}
From (ii) and (iii), a point converges exponentially quickly into a ball of constant radius, then from (i) this point needs a constant number of extra iterations in order to reach the center of the ball, that is, a uniform vector.

\begin{proposition}\label{lemma:average}
  There exists an $n$ in $\O(\log N)$ such that $Y_n$ is a uniform vector.
\end{proposition}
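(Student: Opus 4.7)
The plan is to execute the three-step roadmap outlined just above the proposition. First, unfolding the recurrence (\ref{eq:averaging}) yields
\[
Y_n \;=\; M^n Y_0 \;+\; \sum_{i=0}^{n-1} \frac{b_i}{p}\, M^{n-1-i}\, K,
\]
which isolates the contribution of the perturbations $(b_i)$. The linear part is analyzed through the spectrum of $M$: a direct computation gives the characteristic polynomial $(x-1)R(x)$, so $1$ is a simple eigenvalue of $M$ (with right eigenvector the uniform vector $\mathbf{1}={}^t(1,\dots,1)$), while by Lemma~\ref{lemma:roots} the other $p-1$ eigenvalues of $M$ are the roots of $R$, all of modulus strictly less than $1$. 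Let $V$ denote the $M$-invariant complement of $\R\mathbf{1}$ (the sum of the generalized eigenspaces for the roots of $R$) and $\pi_V$ the projector onto $V$ parallel to $\mathbf{1}$; note that $Y_n$ is uniform if and only if $\pi_V Y_n=\0$.

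The contraction argument is then straightforward. Let $\rho<1$ be the spectral radius of $M|_V$; by equivalence of norms there is a constant $C$ (depending only on $p$) with $\|M^k v\|\leq C\rho^k\|v\|$ for all $v\in V$ and $k\geq 0$. Applying $\pi_V$ to the displayed identity and using $0\leq b_i/p\leq 1$ together with $\sum_k \rho^k = 1/(1-\rho)$, we obtain
\[
\|\pi_V Y_n\| \;\leq\; C\rho^n\,\|\pi_V Y_0\| \;+\; \frac{C\,\|\pi_V K\|}{1-\rho}.
\]
Since the initial vector $Y_0={}^t(-N,0,\dots,0,a_0)$ satisfies $\|\pi_V Y_0\|=\O(N)$, taking $n=\lceil c\log N\rceil$ with $c$ large enough forces the first term to be bounded by a constant, so after $n=\O(\log N)$ steps the orbit $(Y_i)$ lies in a ball of constant radius $C_0$ around the line $\R\mathbf{1}$. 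This carries out steps (ii) and (iii) of the roadmap.

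The remaining task---and the main obstacle---is to pass from ``close to uniform'' to ``exactly uniform''. Because $Y_n\in\Z^p$, the spread $\max_j (Y_n)_j-\min_j (Y_n)_j$ is a non-negative integer bounded by a universal constant once $\|\pi_V Y_n\|$ is bounded by $C_0$. Hence only finitely many values of $\pi_V Y_n$ can occur in this regime, and by Remark~\ref{remark:determined2} each $b_i$ is (up to the one exceptional case) forced by $Y_i$, so the dynamics on this finite set is essentially deterministic. A short pigeonhole argument combined with the ongoing contraction of $M|_V$ and the integrality of the iterates should then show that within a constant number $k$ of further iterations one necessarily reaches $\pi_V Y_{n+k}=\0$, i.e.\ a uniform vector; absorbing this constant into the $\O(\log N)$ bound completes step (i) of the roadmap and finishes the proof. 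This last passage is the hardest part of the argument, since one must rule out that the residual perturbation $(b_i/p)\pi_V K$ perpetually feeds against the contraction just enough to keep the orbit in a small neighborhood of, but never on, $\R\mathbf{1}$.
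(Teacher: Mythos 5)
Your first two steps are sound and essentially reproduce the paper's Lemma~\ref{lemma:large}: the paper works with $Z_i=Y_i-M_i$ and the matrix $O=DM$ (whose spectrum is $\{0,\lambda_1,\dots,\lambda_{p-1}\}$) instead of a spectral projector $\pi_V$ commuting with $M$, but this is the same decomposition of $\R^p$ into the uniform line and a contracted complement, and both versions yield that the orbit enters a ball of constant radius around $\R\mathbf{1}$ after $\O(\log N)$ steps.

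The gap is in your last paragraph, and you have located it yourself: ``a short pigeonhole argument \dots should then show'' is not a proof, and pigeonhole alone cannot close it. Once the orbit is confined to the finitely many classes of bounded spread, pigeonhole only gives eventual periodicity of the quotient dynamics on $\Z^p/\Z\mathbf{1}$; it does not exclude a cycle that avoids the uniform class, which is precisely the scenario you flag in your closing sentence. The continued contraction of $M|_V$ does not help either, since the perturbation $(b_i/p)\,\pi_V K$ has norm of order $1$, i.e.\ the same order as the ball you are already in. The missing ingredient is the paper's Lemma~\ref{lemma:small}, a purely combinatorial monotonicity statement that uses no contraction at all: if $Y_i$ is not uniform, then $\underline{m}_i<m_i<\overline{m}_i$, and the newly produced entry $y_{i+p}=m_i+\frac{b_i}{p}$ is an integer lying in $[m_i,m_i+1]$, hence satisfies $\underline{m}_i<y_{i+p}\leq\overline{m}_i$ (the upper bound because $\overline{m}_i$ is an integer strictly larger than $m_i$). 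Iterating, every entry created during the next $p$ steps exceeds the running minimum strictly and does not exceed the running maximum; after $p$ steps all old entries have been shifted out, so the integer spread $\overline{m}-\underline{m}$ has strictly decreased. This linear decrease, applied a constant number of times inside the constant-radius ball, forces an exactly uniform vector and completes the proof; you need to supply this lemma (or an equivalent) to make your final step valid.
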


\begin{proof}
Let $m_i$ (respectively $\overline{m}_i$, $\underline{m}_i$) denote the mean (respectively maximal, minimal) of values of $Y_i$. We will prove that $\overline{m}_i-\underline{m}_i$ converges exponentially quickly to 0, which proves the result.

We start with $Y_0=\,^t(-N,0,\dots,0,a_0)$, thus $\overline{m}_0-\underline{m}_0 = N+a_0 \leq \frac{p+1}{p} N$ since $a_0 \leq \frac{N}{p}$ (recall that $a_0$ is the number of times column 0 has been fired).

This proof is composed of two parts. Firstly, the system converges exponentially quickly on a large scale. Intuitively, when $\overline{m}_i-\underline{m}_i$ is large compared to $p$, the perturbation is negligible.

\begin{adjustwidth}{.5cm}{0cm}
\setcounter{lemmalarge}{\value{lemma}}
\begin{lemma}\label{lemma:large}
  There exists a constant $\alpha$ and $n_0$ in $\O(\log N)$ s.t. $\overline{m}_{n_0} - \underline{m}_{n_0} < \alpha$.
\end{lemma}
\begin{proof}[Proof sketch]
%
Complete proof in Appendix \ref{a:convergence}. Let $M_i=(m_i,\dots,m_i)$ in $\Z^p$. Since $Y_i$ converges roughly towards the mean of its values, we consider the evolution of $Z_i=Y_i-M_i$. We easily  establish a relation of the form $Z_{i+1}=O \, Z_i + C_i$ with $C_i$ a bounded perturbation vector and $O$ a linear transformation. It follows that 
 $Z_{n}=O^{n} Z_0 + \sum_{i=0}^{n-1} O^{n-1-i}\,C_i$. 
 
Moreover, the characteristic polynomial of $O$ is $R(x)$ (see Lemma \ref{lemma:roots} in Appendix \ref{a:roots}). One proves that $R(x)$ has $p-1$ distinct roots $\lambda_1,\dots,\lambda_{p-1}$ (using  coprimality of $R(x)$ and $R'(x)$) and for all $k$, $| \lambda_k | \leq \frac{p-1}{p} < 1$ (using a bound by Enestr\"om and Kakeya, see for example \cite{enestromkakeya}).


 Consequently, $O$ is a contraction operator, and $\sum_{i=0}^{n-1} O^{n-1-i}\,C_i$ can be upper bounded by a constant independent of $n$ and the number of grains $N$. We can therefore conclude that there exists an $n_0$ in $\O(\log N)$ such that $Z_{n_0}=O^{n_0} Z_0 + \sum_{i=0}^{n_0-1} O^{n_0-1-i}\,C_i$ is upper bounded.
\end{proof}
\end{adjustwidth}

Secondly, on a small scale, the system converges linearly.

\begin{adjustwidth}{.5cm}{0cm}
\begin{lemma}\label{lemma:small}
The value of $\overline{m}_i - \underline{m}_i$ decreases linearly: if $\underline{m}_i \neq \overline{m}_i$, then there is an integer $c$, with $0 \leq c \leq p$ such that $\overline{m}_{i+c} - \underline{m}_{i+c} < \overline{m}_i - \underline{m}_i$.
\end{lemma}
\begin{proof}
  If $\underline{m}_i \neq \overline{m}_i$, that is, if the vector $Y_i$ is not uniform, the mean value is strictly between the greatest and smallest values: $\underline{m}_i < m_i < \overline{m}_i$. Consequently $\underline{m}_i < y_{i+p} = m_i + \frac{b_i}{p} \leq \overline{m}_i$ (since the perturbation added is at most one and the resulting number is an integer, we cannot reach a greater integer). Therefore, we get  $\underline{m}_i \leq \underline{m}_{i+1} \leq \overline{m}_{i+1} \leq \overline{m}_{i}$. 
  
  This reasoning applies while $\underline{m}_{i+j} \neq \overline{m}_{i+j}$,  
  from which we get $\underline{m}_{i+j}<  y_{i+p+j} \leq \overline{m}_{i+j}$ and $\underline{m}_{i+j} \leq \underline{m}_{i+j+1} \leq \overline{m}_{i+j+1} \leq \overline{m}_{i+j}$. 
  
If there exists $c \leq p$ such that  $\underline{m}_{i+c}=\overline{m}_{i+c}$,  
  then,  we are done. 
  Otherwise, for  $0 \leq j < p$,  we have, $\underline{m}_i \leq \underline{m}_{i+j} < y_{i + j + p} \leq \overline{m}_{i+j} \leq \overline{m}_i$), thus $\underline{m}_i < \underline{m}_{i+p}  \leq \overline{m}_{i+p} \leq \overline{m}_i$.
\end{proof}
\end{adjustwidth}

To conclude, we start with $\overline{m}_0-\underline{m}_0$ in $\O(N)$, we have a constant $\alpha$ and a $n_0$ in $\O(\log N)$ such that $\overline{m}_{n_0}-\underline{m}_{n_0} < \alpha$ thanks to the exponential decrease on a large scale (Lemma \ref{lemma:large}). Then after $p$ iterations the value of $\overline{m}_{n_0+p}-\underline{m}_{n_0+p}$ is decreased by at least 1 (Lemma \ref{lemma:small}), hence there exists $\beta$ with $\beta \leq p\, \alpha$ such that after $\beta$ extra iterations we have $\overline{m}_{n_0+\beta}-\underline{m}_{n_0+\beta}=0$. Thus $Y_{n_0+\beta}$ is a uniform vector, and $n_0+\beta$ is in $\O(\log N)$.
\end{proof}

In this proof, neither the discrete nor the continuous studies is conclusive by itself. On one hand, the discrete study gives a linear convergence but not an exponential convergence. On the other hand, the continuous study gives an exponential convergence towards a uniform vector, but in itself the continuous part never reaches a uniform vector but tends asymptotically towards it. It is the simultaneous study of those modalities (discrete and continuous) that allows to reach the conclusion.

\begin{remark}\label{remark:p2}
  Note that for $p=1$, the averaging system has a trivial dynamics. For $p=2$, the behavior is a bit more complex, but major simplifications are found: the computed value is equal to the mean of two values, hence in this case the difference $\overline{m}_i-\underline{m}_i$ decreases by a factor of two {\em at each time step}.
\end{remark}

%
\subsection{Emergence of a loosely wavy shape}\label{ss:wave}

We call {\em wave} the pattern $p \cdot \ldots \cdot 2\cdot 1$ in the sequence of slopes. Lemma \ref{lemma:average} shows that there exists an $n=\O(\log N)$ such that $Y_n$ is a uniform vector. In this subsection, we prove that if $Y_n$ is a uniform vector, then the shape of the sandpile configuration is exclusively composed of waves and 0s from the index $n$.

\setcounter{lemmayn2wave}{\value{lemma}}
\begin{lemma}\label{lemma:yn2wave}
  $Y_n$ is a uniform vector of $\Z^p$ implies
  $$\pi(N)_{[n,\infty[} \in \Big(0 + ( p \cdot p\!-\!1 \cdot \ldots \cdot 1 ) \Big)^* 0^\omega$$
\end{lemma}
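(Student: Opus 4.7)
The plan is to analyse the averaging recursion $Y_{i+1} = M\,Y_i + \frac{b_i}{p}K$ starting from a uniform vector and observe that the slope sequence $(b_j)_{j\ge n}$ is essentially forced by it. Suppose $Y_i = {}^t(y,\ldots,y)$ with $y \in \Z$. Then the mean satisfies $m_i = y \in \Z$, and Remark \ref{remark:determined2} immediately implies $b_i \in \{0, p\}$.

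In the case $b_i = 0$, applying $M$ shifts upward and appends the new bottom entry $y + 0 = y$, so $Y_{i+1}$ is again uniform with value $y$ and the corresponding slope is $\pi(N)_i = 0$. In the case $b_i = p$, I would show that a full wave $p \cdot (p{-}1) \cdots 1$ is emitted at positions $i, i+1, \ldots, i+p-1$, and that $Y_{i+p}$ becomes uniform with value $y+1$. To prove this I would argue by induction on $k \in \{0, 1, \ldots, p-1\}$ that $Y_{i+k}$ has its first $p-k$ entries equal to $y$ and its last $k$ entries equal to $y+1$. The base case $k=0$ is $Y_i$; the step $k=1$ uses the appended entry $y + p/p = y+1$. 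For $1 \le k \le p-1$, the mean of $Y_{i+k}$ equals $m_{i+k} = y + k/p \notin \Z$, so Remark \ref{remark:determined2} forces $b_{i+k} = p(\lceil m_{i+k}\rceil - m_{i+k}) = p - k$, which is the next entry of the wave, and the new bottom entry of $Y_{i+k+1}$ becomes $y + k/p + (p-k)/p = y+1$, giving exactly the shape claimed at $k+1$. After $p$ steps, $Y_{i+p}$ is uniform with value $y+1$ and the emitted slopes are precisely $p, p{-}1, \ldots, 1$.

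Iterating the dichotomy from $Y_n$, each uniform vector $Y_j$ we encounter either produces a single $0$ slope followed by a uniform vector with the same value, or produces a full wave followed by a uniform vector whose common value is increased by one. Consequently $\pi(N)_{[n,\infty[}$ is a concatenation of symbols $0$ and full waves $p \cdot (p{-}1) \cdots 1$. Since $\pi(N)$ has finite support, this concatenation is finite and is completed by $0^\omega$, yielding the regular expression claimed. The only delicate point is the inductive bookkeeping on the intermediate, non-uniform shape of $Y_{i+k}$ during a wave; everything else follows mechanically from the explicit form of $M$ and $K$ together with Remark \ref{remark:determined2}.
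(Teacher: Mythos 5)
Your proposal is correct and follows essentially the same route as the paper's own proof: force $b_i\in\{0,p\}$ at a uniform vector via integrality of $Y_{i+1}$, then track the intermediate shape $(y,\dots,y,y+1,\dots,y+1)$ during a wave to force $b_{i+k}=p-k$, returning to a uniform vector with value incremented by one. Your explicit induction on $k$ with the mean $y+k/p$ is just a slightly more formal rendering of the paper's ``\emph{et cetera}'' step, and your closing remark on finite support matching the trailing $0^\omega$ is the same observation the paper makes implicitly.
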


\begin{proof}[Proof sketch] Complete proof in Appendix \ref{a:yn2wave}.
We straightforwardly apply Remark \ref{remark:determined2}. If $Y_n$ is a uniform vector, we notice that the value of $b_n$ is 0 or $p$. If it is $0$ then $Y_{n+1}$ is still a uniform vector; if it is $p$, then the sequence $(b_i)_{n \leq i < n+p}$ is determined to be equal to $p \cdot p-1 \cdot \ldots \cdot 1$, and $Y_{n+p}$ is a uniform vector. The following diagram illustrates those observations: the grey node represents a uniform vector, and arrows are labelled by values of the sequence $b_i$. If we start on the grey node, any path's labels verify the statement of the Lemma.
\begin{center}
  \includegraphics{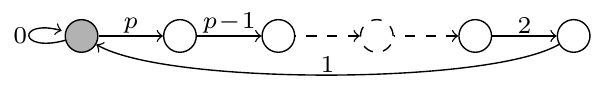}
\end{center}
\end{proof}

\begin{remark}\label{remark:wave}
Composing Proposition \ref{lemma:average} and Lemma \ref{lemma:yn2wave} allows us to prove the emergence, from a logarithmic column, of the shape given in Lemma \ref{lemma:yn2wave}.
\end{remark}

%
\subsection{Avalanches to complete the proof}\label{ss:proof}

In order to prove Theorem \ref{theorem:main}, we refine Remark \ref{remark:wave} to show that there is at most one set of two non empty and consecutive columns of equal height, called {\em plateau} of size two, and corresponding to a slope equal to 0. It seems necessary to overcome the ``static'' study ---for a given fixed point--- presented above, and consider the dynamic of sand grains from $\pi(0)$ to $\pi(N)$. As presented in Appendix \ref{a:hourglass}, $\pi(N)$ can be computed inductively, using the relation
$$\text{for all } k>0,~ \pi(\pi(k-1)^{\downarrow 0})=\pi(k)$$
where $\sigma^{\downarrow 0}$ denotes the configuration obtained by adding one grain on column 0 of $\sigma$. We start from $\pi(0)$ and inductively compute $\pi(1),\pi(2),\dots,\pi(N\!-\!1)$ and $\pi(N)$ by repeating the addition of one grain on column 0 and reaching a stable configuration. The sequence of firings from $\pi(k\!-\!1)^{\downarrow 0}$ to $\pi(k)$ is called the $k^{th}$ {\em avalanche} (see Appendix \ref{a:hourglass} for details).

We studied the structure of avalanches in \cite{LATA,MFCS}, and first proved that each column is fired at most once in an avalanche $s^k$. Secondly, we showed that as soon as $p$ consecutive columns are fired, then the avalanche fires a set of consecutive columns ---without any {\em hole} $i$ such that $i \notin s^k$ and $(i+1) \in s^k$---, and saw that this property leads to important regularities in successive fixed points. The following proof uses those observations: the structure of an avalanches on a wave pattern is very constrained, and as soon as an avalanche goes beyond a wave, it necessarily fires every column of that wave, thus it fires a set of $p$ consecutive columns (formally stated in Corollary \ref{corollary:snowball} of Appendix \ref{a:gdl}). We detail in the following proof why this property of avalanches on wave patterns ensures that if there is at most one plateau of size two (one slope equal to 0) in-between wave patterns of a fixed point $\pi(N-1)$, then there remains at most one plateau of size two on the wave patterns of the fixed point $\pi(N)$.

\begin{proof}[Proof of Theorem \ref{theorem:main}]
  We prove the result by induction on $N$. From Proposition \ref{lemma:average} and Lemma \ref{lemma:yn2wave}, there is an index $n$ (resp. $n'$) in $\O(\log N)$ from which $\pi(N)$ (resp. $\pi(N-1)$) is described by the expression given in Lemma \ref{lemma:yn2wave}. Moreover, from Corollary \ref{corollary:snowball} (see Appendix \ref{a:gdl}), there is an index $l$ in $\O(\log N)$ such that the $N^{th}$ avalanche fires a set of consecutive columns on the right of $l$. Without loss of generality, we consider that $l \geq n,n'$, and will prove that if $\pi(N-1)_{[l,\infty[}$ has at most one value 0, then so has $\pi(N)_{[l,\infty[}$.
  
  Now, if the avalanche ends before column $l-p$ (if $\max s^N < l-p$), the result holds. In the other case, we simply notice by contradiction that as soon as the avalanche reaches the wave patterns, it necessarily ends on the first value 0 it encounters, otherwise the resulting configuration is not stable. The consequence is that the 0 ``climbs'' one wave to the left, preserving the invariant of having at most one value 0 in-between wave patterns.
\end{proof}


\section{Concluding discussion}\label{s:conclusion}

The proof technic we set up in this paper allowed us to prove the emergence of regular patterns periodically repeated on fixed points, without requiring a precise understanding of the initial segment's dynamic. Arguments of linear algebra allowed to prove a rough convergence of the system (when the dynamic is not precisely known but coarsely bounded), completed with arguments of combinatorics, using the discreteness of the model, to prove the emergence of precise and regular wave patterns.

This result stresses the fact that sandpile models are on the edge between discrete and continuous systems. Indeed, when there are very few sand grains, each one seems to contribute greatly to the global shape of the configuration. However, when the number of grains is very large, a particular sand grain seems to have no importance to describe the shape of a configuration. The result also suggests a separation of the discrete and continuous parts of the system. On one hand, the seemingly unordered initial segment, interpreted as reflecting the discrete behavior, prevents regularities from emerging. On the other hand, the asymptotic and ordered part, interpreted as reflecting the continuous behavior, lets a regular and smooth pattern come into view.

Nevertheless, the separation between discrete and continuous behaviors may be challenged because the continuous part emerges from the discrete part. We have two remarks about this latter fact. Firstly, the consequence seems to be a slight bias appearing on the continuous part: it is not fully homogeneous ---that is, with exactly the same slope at each index--- which would have been expected for a continuous system, but a ---very small--- pattern is repeated. It looks like this bias comes from the gap between the unicity of the {\em border} column on the left side at index $-1$ compared to the rule which has a parameter $p$, because we still observe the appearance of wave patterns starting from variations of the initial configuration (for example starting from $p$ consecutive columns of height $N$, thus $p\,N$ grains). Secondly, if we consider the asymptotic form of a fixed point, the relative size of the discrete part is null. This, regarding the intuition described above that when the number of grains is very large then a particular grain has no importance, is satisfying.

Finally, the emergence of regularities in this system hints at a clear qualitative distinction between some sand grains and a heap of sand. Let us save the last words to a distracting application to the famous {\em sorites paradox}. Someone who has a very little amount of money is called {\em poor}. Someone {\em poor} who receives one cent remains {\em poor}. Nonetheless, if the increase by 1 cent is repeated a great number of times then the person becomes {\em rich}. The question is: when exactly does the person becomes {\em rich}? An answer may be that {\em richness} appears when money creates waves...

\bibliographystyle{plain}
\bibliography{biblio}

\begin{thebibliography}{10}

\bibitem{bak88}
P.~Bak and K.~Tang, C.~Wiesenfeld.
\newblock Self-organized criticality.
\newblock {\em Phys. Rev. A}, 38(1):364--374, Jul 1988.

\bibitem{creutz96}
M.~Creutz.
\newblock Cellular automata and self organized criticality.
\newblock In {\em in Some New Directions in Science on Computers}, 1996.

\bibitem{dartois}
A.~Dartois and C.~Magnien.
\newblock Results and conjectures on the sandpile identity on a lattice.
\newblock In {\em Discrete Model for C.S.}, pages 89--102. Discrete Math. and
  T.C.S., 2003.

\bibitem{durandlose98}
J.~O. Durand-Lose.
\newblock Parallel transient time of one-dimensional sand pile.
\newblock {\em Theor. Comput. Sci.}, 205(1-2):183--193, 1998.

\bibitem{formenti07}
E.~Formenti, B.~Masson, and T.~Pisokas.
\newblock Advances in symmetric sandpiles.
\newblock {\em Fundam. Inform.}, 76(1-2):91--112, 2007.

\bibitem{formenti11}
E.~Formenti, T.~Van Pham, T.~H.~D. Phan, and T.~T.~H. Tran.
\newblock Fixed point forms of the parallel symmetric sandpile model.
\newblock {\em CoRR}, abs/1109.0825, 2011.

\bibitem{2006-Goles-CrossingInfo2DSandpile}
A.~Gajardo and E.~Goles.
\newblock Crossing information in two-di\-mensional sandpiles.
\newblock {\em Theor. Comput. Sci.}, 369(1-3):463--469, 2006.

\bibitem{gajardo}
A.~Gajardo, A.~Moreira, and E.~Goles.
\newblock Complexity of {Langton's} ant.
\newblock {\em Discrete Applied Mathematics}, 117(1-3):41--50, 2002.

\bibitem{propp}
D.~Gale, J.~Propp, S.~Sutherland, and S.~Troubetzkoy.
\newblock Further travels with my ant.
\newblock {\em Mathematical Entertainments column, Mathematical Intelligencer},
  17:48--56, 1995.

\bibitem{enestromkakeya}
R.~B. Gardner and N.~K. Govil.
\newblock Some generalizations of the eneström-kakeya theorem.
\newblock {\em Acta Math. Hungar.}, 74(1-2):125--134, 1997.

\bibitem{goles93}
E.~Goles and M.~A. Kiwi.
\newblock Games on line graphs and sand piles.
\newblock {\em Theor. Comput. Sci.}, 115(2):321--349, 1993.

\bibitem{phan04}
E.~Goles, M.~Latapy, C.~Magnien, M.~Morvan, and T.H.D. Phan.
\newblock Sandpile models and lattices: a comprehensive survey.
\newblock {\em Theor. Comput. Sci.}, 322(2):383--407, 2004.

\bibitem{goles10}
E.~{G}oles and B.~{M}artin.
\newblock {C}omputational {C}omplexity of {A}valanches in the {K}adanoff
  {T}wo-dimensional {S}andpile {M}odel.
\newblock In {\em {P}roc. of {JAC} 2010}, pages 121--132, 12 2010.

\bibitem{goles02}
E.~Goles, M.~Morvan, and T.~H.~D. Phan.
\newblock The structure of a linear chip firing game and related models.
\newblock {\em Theor. Comput. Sci.}, 270(1-2):827--841, 2002.

\bibitem{grauwin}
S.~Grauwin, {\'E}.~Bertin, R.~Lemoy, and P.~Jensen.
\newblock Competition between collective and individual dynamics.
\newblock {\em Nat. Ac. of Sciences USA}, 106(49):20622--20626, 2009.

\bibitem{hopcroft}
J.~E. Hopcroft, R.~Motwani, and J.~D. Ullman.
\newblock {\em Introduction to automata theory, languages, and computation -
  international edition (2. ed)}.
\newblock Addison-Wesley, 2003.

\bibitem{kadanoff89}
L.~P. Kadanoff, S.~R. Nagel, L.~Wu, and S.~Zhou.
\newblock Scaling and universality in avalanches.
\newblock {\em Phys. Rev. A}, 39(12):6524--6537, Jun 1989.

\bibitem{hasselblatt}
A.~Katok and B.~Hasselblatt.
\newblock {\em Introduction to the Modern Theory of Dynamical Systems}.
\newblock Encycl. of Math. and App. Cambridge University Press, 1996.

\bibitem{levine}
L.~Levine and Y.~Peres.
\newblock Spherical asymptotics for the rotor-router model in z d.
\newblock {\em Indiana Univ. Math. J}, pages 431--450, 2008.

\bibitem{miltersen}
Peter~Bro Miltersen.
\newblock The computational complexity of one-dimensional sandpiles.
\newblock {\em Theory Comput. Syst.}, 41(1):119--125, 2007.

\bibitem{moore99}
C.~Moore and M.~Nilsson.
\newblock The computational complexity of sandpiles.
\newblock {\em Journal of Statistical Physics}, 96:205--224, 1999.
\newblock 10.1023/A:1004524500416.

\bibitem{PSSPM}
K.~Perrot, T.H.D. Phan, and T.~Van Pham.
\newblock {On the set of Fixed Points of the Parallel Symmetric Sand Pile
  Model}.
\newblock {\em AUTOMATA 2011}, November 2011.

\bibitem{LATA}
K.~Perrot and {\'E}.~R{\'e}mila.
\newblock {Avalanche Structure in the Kadanoff Sand Pile Model}.
\newblock {\em LATA 2011}, May 2011.

\bibitem{MFCS}
K.~Perrot and {\'E}.~R{\'e}mila.
\newblock {Transduction on Kadanoff Sand Pile Model Avalanches, Application to
  Wave Pattern Emergence.}
\newblock {\em MFCS 2011}, August 2011.

\bibitem{phan08}
T.~H.~D. Phan.
\newblock Two sided sand piles model and unimodal sequences.
\newblock {\em ITA}, 42(3):631--646, 2008.

\bibitem{weaver}
W.~Weaver.
\newblock {Science and Complexity}.
\newblock {\em American Scientist}, 36(536), 1948.

\end{thebibliography}

\newpage


\appendix

\chapter*{\appendixname}

%
%

%

%
%

\section{Hourglass}\label{a:hourglass}

In order to compute $\pi(N)$, the basic procedure is to start from the initial configuration $(N,0^\omega)$ and perform all the possible transitions. However, it also possible to start from the configuration $(0^\omega)$, add one grain on column 0 and perform all the possible transitions, leading to $\pi(1)$, then add another grain on column 0 and perform all the possible transitions, leading to $\pi(2)$, etc... And repeat this process until reaching $\pi(N)$.

Formally, let $b$ be a configuration, $b^{\downarrow 0}$ denotes the configuration obtained by adding one grain on column 0. In other words, if $b=(b_0,b_1,\dots)$ then $b^{\downarrow 0}=(b_0 +1 ,b_1,\dots)$. The correctness of the process described above relies on the fact that
$$(k,0^\omega) \overset{*}{\to} \pi(k-1)^{\downarrow 0}$$
Indeed, there exists a sequence of firings, named a {\em strategy}, $(s_i)_{i=1}^{i=l}$ such that $(k-1,0^\omega) \overset{s_1}{\to} \dots \overset{s_l}{\to} \pi(k-1)$. It is obvious that using the same strategy we have $(k,0^{\omega}) = (k-1,0^\omega)^{\downarrow 0} \overset{s_1}{\to} \dots \overset{s_l}{\to} \pi(k-1)^{\downarrow 0}$ since we only drag one more grain on column 0 along the evolution, which does not prevent any firing (see Figure \ref{fig:inductive}). Thus, with the uniqueness of the fixed point reachable from $(k,0^\omega)$, we have the recurrence formula:
$$\pi(\pi(k-1)^{\downarrow 0})  =   \pi (k)$$
with the initial condition $\pi (0) = 0^\omega$, enabling an inductive computation of $\pi(k)$.

\begin{figure}[!h]
  \centering \includegraphics{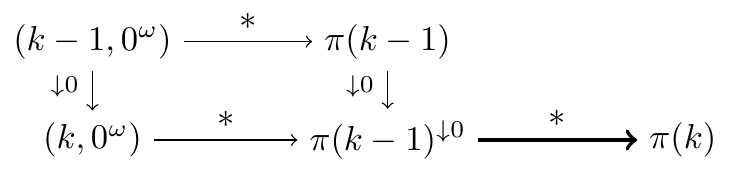}
  \caption{inductive computation of $\pi(k)$ from $\pi(k-1)$. The bold arrow represents an avalanche.}
  \label{fig:inductive}
\end{figure}

The strategy from $\pi(k-1)^{\downarrow 0}$ to $\pi(k)$ is called an {\em avalanche}. Note that due to the non-determinacy of the model, this strategy is not unique. To overcome this issue, it is natural to distinguish a particular one which we think is the simplest: the {\em $k^{th}$ avalanche} $s^k$ is the leftmost strategy from $\pi(k-1)^{\downarrow 0}$ to $\pi(k)$, where {\em leftmost} is the minimal strategy according to the lexicographic order. This means that at each step, the leftmost possible firing is performed. A preliminary result of \cite{LATA} is that any column is fired at most once during an avalanche, which allows to write without ambiguity for an index $i$: $i \in s^k$ or $i \notin s^k$.

%
%

\section{There is no plateau of length larger than $p+1$}\label{a:plateau}

A plateau is a set of at least two non empty and consecutive columns of equal height. The length of a plateau is the number of columns composing it.

\begin{lemma}\label{lemma:plateau}
  For any $N$ and any configuration $\sigma$ such that $(N,0^\omega) \overset{*}{\to} \sigma$, in $\sigma$ there is no plateau of length strictly greater than $p+1$.
\end{lemma}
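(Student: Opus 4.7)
The plan is to translate the statement into the slope representation and then prove the result by induction on the number of applied transitions. A plateau of length $k$ on non-empty columns corresponds exactly to a run of $k-1$ consecutive zero slopes whose first column is non-empty. Hence it suffices to prove that in any configuration $\sigma$ reachable from $(N,0^\omega)$ the slope sequence contains no run of more than $p$ consecutive zeros (among indices with $h_i \geq 1$). The non-emptiness condition will be automatic from the fact that heights are weakly decreasing along the support, so I can essentially just track zero-runs in the slope sequence.

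For the base case, $(N,0^\omega)$ has slopes $(N,0,0,\ldots)$: the only non-empty column is $0$, so there is no plateau at all.

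For the inductive step, suppose $\sigma \overset{i}{\to} \sigma'$ and $\sigma$ has no run of more than $p$ zero slopes. The transition rule modifies only three coordinates: $b'_{i-1}=b_{i-1}+p$, $b'_i=b_i-(p+1)$, and $b'_{i+p}=b_{i+p}+1$. The key observations are that $b'_{i-1}\geq p>0$ and $b'_{i+p}\geq 1>0$, so positions $i-1$ and $i+p$ are strictly positive in $\sigma'$ regardless of what they were in $\sigma$. In particular, firing \emph{destroys} any zero that lived at $i-1$ or $i+p$, and can only \emph{create} a new zero at position $i$ (precisely when $b_i=p+1$).

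Now let $[j_1,j_2]$ be any maximal run of zeros in $\sigma'$. The indices $i-1$ and $i+p$ cannot belong to it by the observation above. If $i\notin[j_1,j_2]$, the slopes inside the run are unchanged, so it is also a run of zeros in $\sigma$ and the induction hypothesis bounds its length by $p$. If $i\in[j_1,j_2]$, then $j_1\geq i$ (since $i-1$ is excluded) and $j_2\leq i+p-1$ (since $i+p$ is excluded), so $j_2-j_1+1 \leq p$. In either case the run has length at most $p$, and hence $\sigma'$ contains no plateau of length strictly greater than $p+1$.

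There is no real obstacle here: the argument is essentially bookkeeping once one notices the complementary roles of the $+p$ update at $i-1$ and the $+1$ update at $i+p$, which together prevent the firing from splicing new zeros onto a pre-existing long zero-run. The only small care needed is to remark that the newly fired column $i$ still satisfies $h_i \geq 1$ after firing (because $b_i>p$ forces $h_i \geq p+1$ before firing, hence $h_i\geq 1$ afterwards), so a zero run starting at $i$ genuinely corresponds to a plateau of non-empty columns.
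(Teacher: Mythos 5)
Your core idea is the same as the paper's: a firing at $i$ forces $b'_{i-1}\geq p>0$ and $b'_{i+p}\geq 1>0$, so a run of zero slopes can contain neither $i-1$ nor $i+p$, and is therefore either untouched by the firing or trapped inside the window $[i,i+p-1]$. The paper packages this as a minimal-counterexample contradiction (first transition creating a plateau of length $\geq p+2$), you package it as a forward induction; that difference is cosmetic.

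There is, however, a gap in your Case~1, and it sits exactly where you said the bookkeeping would be "automatic." Your invariant has to be restricted to zero-runs among non-empty columns (otherwise the tail $0^\omega$ falsifies it), but a firing at $i$ raises the heights of columns $i+1,\dots,i+p$ by one and can therefore \emph{extend the support}: columns that were empty in $\sigma$ become non-empty columns of height $1$ in $\sigma'$. A maximal zero-run of $\sigma'$ lying among these newly non-empty columns satisfies your Case~1 hypotheses ($i-1$, $i$, $i+p$ all outside it) and its slopes are indeed unchanged from $\sigma$ --- but in $\sigma$ those indices belonged to the empty tail, where the maximal zero-run is the infinite run and your induction hypothesis asserts nothing. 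Concretely, for $p=2$, firing column $0$ in $(4,0^\omega)$ gives heights $(2,1,1,0^\omega)$ and slopes $(1,0,1,0^\omega)$: the zero at index $1$ is "unchanged from $\sigma$," yet in $\sigma$ column $1$ was empty and the run containing index $1$ was infinite, so the IH does not bound it. The conclusion still holds, because such a run is squeezed between the old support boundary (its left end is $\geq w>i$, where $w$ is the first empty column of $\sigma$, whose predecessor carries a positive slope) and the position $i+p$, hence has length at most $p-1$ --- but that is an extra positional argument you never make. The paper's corresponding "untouched" case avoids this trap by requiring the fired column to lie outside the whole window $[k-p,k+p+1]$ around the plateau, which preserves the heights (hence the non-emptiness) as well as the slopes. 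Your proof is repairable with one additional sentence, but as written the inductive step does not close.
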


\begin{proof}
  This proof proceeds by contradiction, using the fact that configurations are sequences of non-negative integers ($\mathcal H_1$). Suppose there exists a configuration $\sigma$ reachable from $(N,0^\omega)$ for some $N$, such that there is a plateau of length at least $p+2$ in $\sigma$. Since there is no plateau in the initial configuration $(N,0^\omega)$, and there is a finite number of steps to reach $\sigma$, there exists two configurations $\rho$ and $\tau$ such that $\rho \to \tau$ and such that there is a plateau of length at least $p+2$ in $\tau$, and none in $\rho$ ($\mathcal H_2$).
  Let $k$ be the leftmost column of the plateau of length at least $p+2$ in $\tau$, {\em i.e.} for all column $j$ between $k$ and $k+p$, $\tau_j=0$ ($\mathcal H_3$). We will now see that there is no $i$ such that $\rho \overset{i}{\to} \tau$, which completes the proof.
  \begin{itemize}
    \item if $i<k-p$ or $i>k+p+1$ then a firing at $i$ has no influence on columns between $k$ and $k+p+1$ and there is a plateau of length at least $p+2$ in $\rho$, contradicting $\mathcal H_2$.
    \item if $k-p \leq i \leq k$ then according to the rule definition we have $\tau_{i+p}=\rho_{i+p-1}-1$, and from $\mathcal H_3$ $\rho_{i+p}=0$ therefore $\tau_{i+p}<0$ which is not possible (recalled in $\mathcal H_1$).
    \item if $k < i \leq k+p+1$ then according to the rule definition we have $\tau_{i-1}=\rho_{i-1}-p$ and from $\mathcal H_3$ $\rho_{i-1}=0$ therefore $\tau_{i-1}<0$ which again is not possible from $\mathcal H_1$.
  \end{itemize}
\end{proof}

%
%

\section{The support of $\pi(N)$ is in $\Theta(\sqrt{N})$}\label{a:support}

We give bounds for the maximal index of a non-empty column in the fixed point $\pi(N)$ according to the number $N$ of grains, denoted $w(N)$. The number $w(N)$ can be interpreted as the support or width or size of $\pi(N)$. We consider a general model KSPM($p$) with $p$ a constant integer greater or equal to 1. A formal definition of $w(N)$ is for example $w(N)=w(\pi(N))=\min\{ i | \forall j \geq i, \pi(N)_j=0 \}$. See Figure \ref{fig:frame}.

\begin{lemma}\label{lemma:support}
  The support of $\pi(N)$ is in $\Theta(\sqrt{N})$.
\end{lemma}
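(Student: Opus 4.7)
The plan is to obtain matching upper and lower bounds on $w(N)$ by exploiting two complementary properties of a fixed point $\pi(N)$: the sequence of heights is non-increasing (so all columns of index $< w(N)$ are non-empty), and slopes are bounded above by $p$ (stability) and, thanks to Lemma \ref{lemma:plateau}, cannot stay equal to $0$ for long. Setting $W = w(N)$, the identity $N = \sum_{i=0}^{W-1} h_i$ will translate each slope bound into a comparison between $N$ and $W^2$ by telescoping $h_i = \sum_{j \geq i} b_j$.

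For the lower bound $W = \Omega(\sqrt{N})$, I would use stability directly: since $0 \leq b_j \leq p$ for all $j$ and $h_W = 0$, one gets $h_i = \sum_{j=i}^{W-1} b_j \leq p(W-i)$. Summing over $i$ from $0$ to $W-1$ yields $N \leq p\, W(W+1)/2$, hence $W \geq \sqrt{2N/p} - 1$, which gives $W = \Omega(\sqrt{N})$ (recall $p$ is a fixed constant).

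For the upper bound $W = O(\sqrt{N})$ I would invoke Lemma \ref{lemma:plateau}. Columns $0,1,\dots,W-1$ are all non-empty (heights are non-increasing and $h_{W-1} \geq 1$), so among any $p+2$ consecutive such columns there must be a non-zero slope, otherwise they would form a forbidden plateau of length $p+2$. Consequently $h_i - h_{i+p+1} = b_i + b_{i+1} + \dots + b_{i+p} \geq 1$ whenever $i + p + 1 \leq W-1$. Iterating this drop-by-one every $p+1$ steps gives $h_i \geq \lfloor (W-1-i)/(p+1) \rfloor$, and summing $N = \sum_{i=0}^{W-1} h_i \geq W^2/(2(p+1)) - O(W)$, so $W = O(\sqrt{N})$.

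Combining both estimates yields $w(N) = \Theta(\sqrt{N})$. I expect no genuine obstacle here: the only subtlety is checking that the telescoping argument in the lower-bound direction applies uniformly across columns $0,\dots,W-1$, which is granted because $\pi(N)$ is non-increasing and the forbidden-plateau condition of Lemma \ref{lemma:plateau} applies to every window of $p+2$ consecutive non-empty columns.
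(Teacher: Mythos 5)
Your proof is correct and follows essentially the same route as the paper: the lower bound comes from the stability condition $b_i \leq p$ (so $N \leq p\,W(W+1)/2$), and the upper bound from Lemma \ref{lemma:plateau} forbidding plateaus longer than $p+1$ (forcing a height drop of at least $1$ every $p+1$ columns). The only difference is presentational — you make the telescoping $h_i = \sum_{j\geq i} b_j$ explicit where the paper writes the resulting sums directly — and the constants you obtain are of the same order.
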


\begin{proof}
  The support of $\pi(N)$ is denoted $w(N)$.
  
  Lower bound: $\pi(N)$ is a fixed point, therefore by definition for all index $i$ we have $\pi(N)_i \leq p$. Then,
  $$N \leq \sum \limits_{i=0}^{w(N)} p \cdot i = p \frac{w(N) \cdot (w(N)+1)}{2} < p^2(w(N)+1)^2$$
  hence $\frac{1}{p} \sqrt{N} - 1 < w(N)$.
  
  Upper bound: From Lemma \ref{lemma:plateau}, there is no plateau of length greater than $p+1$. Therefore, for $w(N) \geq p$ we have
  $$N \geq \sum \limits_{i=0}^{\lfloor \frac{w(N)}{p+1} \rfloor} (p+1) \cdot i \geq (p+1)\left(\frac{\left(\frac{w(N)}{p+1} - 1\right)\frac{w(N)}{p+1}}{2}\right) > \left(\frac{w(N)}{p+1} - 1\right)^2$$
  hence $(p+1)\sqrt{N} + p+1 > w(N)$.
\end{proof}

\begin{figure}[!h]
  \centering \includegraphics{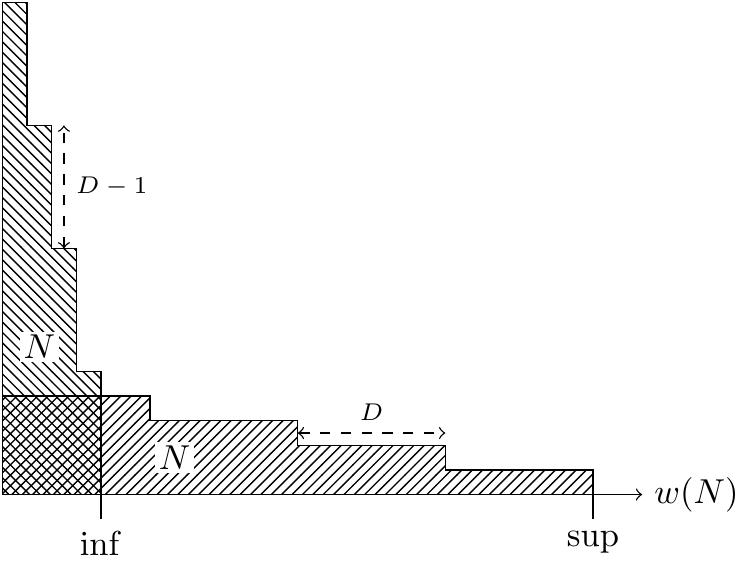}
  \caption{The support of $\pi(N)$ is in $\Theta(\sqrt{N})$. It is lower bounded by the fact that on a stable configuration each slope is at most $p$, and upper bounded by the fact that there is no plateau of length larger than $p+1$.}
  \label{fig:frame}
\end{figure}
   
%
%


\section{Convergence of $(Y_i)_{i \in \N}$}\label{a:convergence}

We recall that $m_i$ (resp. $\underline{m_i}$, $\overline{m}_i$) is the mean (resp. minimum, maximum) of the components of $Y_i$.

\setcounter{tmp}{\value{lemma}}
\setcounter{lemma}{\value{lemmalarge}}
\begin{lemma}
\setcounter{lemma}{\value{tmp}}
  There exists a constant $\alpha$ and a $n_0$ in $\O(\log N)$ s.t. $\overline{m}_{n_0} - \underline{m}_{n_0} < \alpha$.
\end{lemma}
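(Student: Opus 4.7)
The plan is to control the spread $\overline{m}_i - \underline{m}_i$ by tracking the deviation vector $Z_i = Y_i - M_i$, where $M_i \in \Z^p$ denotes the uniform vector whose entries all equal the mean $m_i$ of the components of $Y_i$. Since subtracting a uniform vector does not change the spread, bounding any norm of $Z_i$ yields the bound on $\overline{m}_i - \underline{m}_i$. Moreover, $Z_i$ lies in the $(p-1)$-dimensional hyperplane $H = \{v \in \R^p : \sum_j v_j = 0\}$.

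The first step is to derive an affine recurrence $Z_{i+1} = O Z_i + C_i$. Writing out $(Y_{i+1})_j = (Y_i)_{j+1}$ for $j < p$ and $(Y_{i+1})_p = m_i + b_i/p$, and computing $m_{i+1} - m_i$ in terms of $(Y_i)_1$ and $b_i$, a direct calculation produces an explicit matrix $O$ together with a perturbation $C_i$ whose entries are multiples of $b_i/p^2$. Because $0 \leq b_i \leq p$, the vectors $C_i$ are uniformly bounded by a constant independent of $N$ and $i$. One checks that $O$ preserves $H$, and that the characteristic polynomial of the restriction $O|_H$ is exactly $R(x)$.

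Unrolling the recurrence gives
\[
Z_n = O^n Z_0 + \sum_{i=0}^{n-1} O^{n-1-i} C_i.
\]
Since $R(x)$ has all its roots of modulus at most $(p-1)/p < 1$ (this is Lemma \ref{lemma:roots}, via the Enestr\"om--Kakeya bound; coprimality with $R'(x)$ ensures the roots are simple and $O|_H$ is diagonalizable), the spectral radius of $O|_H$ is strictly less than $1$. Therefore there exists a norm $\|\cdot\|'$ on $H$ and a $\rho < 1$ such that $\|O v\|' \leq \rho \|v\|'$ for every $v \in H$. Since $\|Z_0\|' = \O(N)$, choosing $n_0 = \O(\log N)$ makes $\|O^{n_0} Z_0\|' \leq 1$, while the perturbation term is bounded by $\|C\|' \sum_{k \geq 0} \rho^k = \|C\|'/(1-\rho)$, a constant. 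By norm equivalence in finite dimension, $\overline{m}_{n_0} - \underline{m}_{n_0} \leq 2 \|Z_{n_0}\|_\infty \leq \alpha$ for some constant $\alpha$ depending only on $p$.

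The main obstacle is the spectral analysis of $O|_H$: explicitly identifying its characteristic polynomial as $R(x)$ and justifying that all eigenvalues lie strictly inside the unit disk. The rest of the argument is a routine contraction-mapping estimate once the eigenvalue bound is in hand. Note that the discreteness of $Y_i$ and the precise value of the perturbation $b_i$ are irrelevant here: only the uniform bound $|C_i| \leq \mathrm{const}$ is used, which is why this step yields only convergence into a ball of constant radius rather than to a uniform vector --- the final collapse to a uniform vector is handled separately by Lemma \ref{lemma:small}.
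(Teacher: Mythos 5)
Your proposal follows essentially the same route as the paper's own proof: both track the deviation $Z_i = Y_i - M_i$, derive the affine recurrence $Z_{i+1} = O\,Z_i + C_i$ with a uniformly bounded perturbation, unroll it, and use the Enestr\"om--Kakeya bound on the roots of $R(x)$ to show the relevant spectral radius is at most $\frac{p-1}{p} < 1$, so that $O^{n}Z_0$ shrinks below a constant in $\O(\log N)$ steps while the perturbation sum stays bounded by a geometric series. The only cosmetic differences are that you work with the restriction $O|_H$ to the zero-sum hyperplane and invoke an adapted norm for a spectral-radius contraction, whereas the paper computes the full eigenvalue set $\{0,\lambda_1,\dots,\lambda_{p-1}\}$ of $O = DM$ and cites an ``eventually contracting'' criterion; these are interchangeable.
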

\begin{proof}
  We start with $Y_0=\,^t(-N,0,\dots,0,a_0)$, thus $\overline{m}_0-\underline{m}_0=N+a_0 \leq \frac{p+1}{p}N$.
  
  The relation linking $Y_i$ to $Y_{i+1}$ is
  $$Y_{i+1} = M \, Y_i + \frac{b_i}{p} \, K$$
  
  Since we want to prove that $Y_i$ converges to a uniform vector close to the mean of its values, we will consider the evolution of the distance to the mean vector associated to $Y_i$, using the uniform vector $M_i=(m_i,\dots,m_i)$ of $\Z^p$. Let $Z_i=Y_i-M_i$, we have
  $$Z_{i+1}=O \, Z_i + \frac{b_i}{p} \, L \text{ ~~~~~where } \begin{array}\{{l}. O=D\,M\\L=D\,K\end{array} \text{ with } D=\begin{pmatrix}1 & & 0\\ & \ddots & \\0 & & 1\end{pmatrix}-\frac{1}{p}\begin{pmatrix} 1 & \dots & 1\\\vdots & \ddots & \vdots\\ 1 & \dots & 1\end{pmatrix}$$
 because $D \, M \, Y_i = D \, M \, D \, Y_i$.
  
  The aim is thus to prove that there exists an $n_0$ in $\O(\log N)$ such that the norm of $Z_{n_0}$ is bounded by a constant.
  
  We express $Z_n$ in terms of $Z_0$ and $(b_i)_{0 \leq i \leq n}$:
  $$Z_{n} = O^n Z_0 + \frac{1}{p} \sum \limits_{i=0}^{n-1} b_i O^{n-1-i} L$$
   
  In order to prove the result, we will see that the linear map $O$ is {\em eventually contracting}, hence it converges exponentially quickly to $\0$, its unique fixed point (\cite{hasselblatt} Corollary 2.6.13). That is, $O^n \, Z_0$ converges to $\0$ exponentially quickly. It then remains to upper bound the norm of the remaining sum by $\alpha$ to get the result.
  
  To prove that $O$ is eventually contracting, it is enough to prove that its {\em spectral radius}\footnote{the maximal absolute value of an eigenvalue of $O$.} is smaller than 1 (\cite{hasselblatt} Corollary 3.3.5). This part is detailed in Appendix \ref{a:eigenZ} using the fact that $M$ is a companion matrix which eigenvalues are upper bounded with a result by Enestr\"om and Kakeya \cite{enestromkakeya}.
    
  Since $\overline{m}_0-\underline{m}_0$ is in $\O(N)$, $\| Z_0 \|_{\infty}$ is also in $\O(N)$ and there exists an $n_0$ in $\O(\log N)$ such that $\| O^{n_0} \, Z_0 \|_{\infty} < 1$.
  
  It remains to upper bound the summation by a constant (we recall that for a matrix $A$, $\| A \|_\infty = \sup \| A \, x\|_\infty$ for $\|x\|_\infty=1$):
  $$\begin{array}{rcl}
  \left\| \frac{1}{p} \sum \limits_{i=0}^{n_0-1} b_i \, O^{n_0-1-i} L \right\|_{\infty} &\leq& \frac{1}{p}\sum \limits_{i=0}^{n_0-1} p\, \| O \|_\infty^{n_0-1-i} \| L\|_\infty\\
  &\leq& \frac{1}{1-\| O \|_\infty} \| L\|_\infty\\
  &\leq& \beta-1
  \end{array}$$
  for some constant $\beta$ independent of $N$. Finally, we have
  $$\| Z_{n_0}\|_\infty \leq \| O^{n_0} Z_0 \|_\infty + \| \frac{1}{p} \sum \limits_{i=0}^{n-1} b_i \, O^{n-1-i} L\|_\infty \leq \beta$$
  and the fact that $\overline{m}_{n_0} - \underline{m}_{n_0} \leq 2 \| Z_{n_0} \|_\infty$ completes the proof with $\alpha=2\beta$.
\end{proof}

%
%

\section{Roots of $R(x)$}\label{a:roots}

Let $R(x)=x^{p-1} + \frac{p-1}{p}\, x^{p-2} + \dots + \frac{2}{p}\, x+ \frac{1}{p}$.

\begin{lemma}\label{lemma:roots}
  $R(x)$ has $p-1$ distinct roots $\lambda_1,\dots,\lambda_{p-1}$ and for all $i$, $\lambda_i \leq \frac{p-1}{p}$.
\end{lemma}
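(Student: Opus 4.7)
The plan is to handle the two claims---distinctness and magnitude bound---separately, both by exploiting a short closed form for $R$.

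First, I would establish the telescoping identity
\[
p(1-x)^2 R(x) \;=\; p\,x^{p+1} - (p+1)\,x^p + 1 \;=:\; S(x).
\]
This is a two-step computation: multiplying $pR(x)=\sum_{k=0}^{p-1}(k+1)\,x^k$ by $(1-x)$ collapses the coefficients into the geometric sum $\sum_{k=0}^{p-1}x^k$ minus the boundary term $p\,x^p$, and multiplying once more by $(1-x)$ collapses the geometric sum into $1-x^p$. This identity also explains the factorization of the characteristic polynomial of $A$ as $(1-x)^2 R(x)$ mentioned earlier in the text.

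With $S$ in hand, distinctness of the roots of $R$ reduces to a one-line derivative check, because
\[
S'(x) \;=\; p(p+1)\,x^{p-1}(x-1).
\]
The only candidates for multiple roots of $S$ therefore lie at $0$ or $1$; but $S(0)=1$, so $0$ is not a root at all, and a direct computation gives $S''(1)=p(p+1)\neq 0$, so $1$ is a root of $S$ of multiplicity exactly two. The factor $(1-x)^2$ on the left of the identity accounts for this double root entirely, so the remaining $p-1$ roots of $S$ are precisely the roots of $R$, and they are all simple. Equivalently, $\gcd(R,R')=1$, which is the coprimality argument mentioned in the sketch.

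Second, for the magnitude bound I will invoke the Enestr\"om--Kakeya theorem, in the form stating that for any polynomial $\sum_{k=0}^{n} a_k\,x^k$ with strictly positive coefficients, every complex root $\zeta$ lies in the annulus $\min_k (a_k/a_{k+1})\le |\zeta|\le \max_k (a_k/a_{k+1})$. For $R$ the coefficient of $x^k$ is $(k+1)/p$, so the ratio $a_k/a_{k+1}=(k+1)/(k+2)$ is monotone increasing in $k$ and attains its maximum $(p-1)/p$ at $k=p-2$. The theorem therefore immediately yields $|\lambda|\le (p-1)/p$ for every root $\lambda$ of $R$.

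The only step that is not purely mechanical is spotting the closed form $p(1-x)^2 R(x)=p\,x^{p+1}-(p+1)x^p+1$, and even this is natural once one tries to simplify $(1-x)R(x)$ via partial telescoping. Beyond that, I do not foresee a real obstacle: everything else is a routine derivative computation or a citation.
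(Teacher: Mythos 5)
Your proof is correct, and for the distinctness claim it takes a genuinely different route from the paper's. The paper works with the reversed polynomial $p\,x^{p-1}R(1/x)=x^{p-1}+2x^{p-2}+\dots+(p-1)x+p$ and exhibits an explicit B\'ezout identity $a\,S+b\,S'=1$ with hand-crafted polynomial cofactors $a,b$, which certifies $\gcd(S,S')=1$ but gives no insight into where the coefficients come from and must simply be verified by expansion. You instead multiply $R$ by $(1-x)^2$ to obtain the trinomial $p\,x^{p+1}-(p+1)x^p+1$, whose derivative $p(p+1)x^{p-1}(x-1)$ localizes all possible multiple roots at $0$ and $1$; checking that $0$ is not a root and that $1$ has multiplicity exactly two (absorbed entirely by the factor $(1-x)^2$, which also forces $R(1)\neq 0$) then gives simplicity of all roots of $R$. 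Your identity has the side benefit of justifying, essentially for free, the unproved assertion in Subsection~\ref{ss:dds} that the characteristic polynomial of the companion matrix $A$, namely $x^{p+1}-\frac{p+1}{p}x^p+\frac{1}{p}$, factors as $(1-x)^2R(x)$. The Enestr\"om--Kakeya half of your argument is identical to the paper's, and your explicit computation of the coefficient ratios $(k+1)/(k+2)$ with maximum $(p-1)/p$ is if anything more detailed than what the paper records. One cosmetic caveat: you reuse the letter $S$ for a polynomial different from the paper's $S$, and the bound in the lemma should be read as a bound on $|\lambda_i|$, as both you and the paper actually prove.
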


\begin{proof}
  The distinctness of the roots of $S(x)=p\, x^{p-1}R(\frac{1}{x})=x^{p-1}+2\,x^{p-2}+\dots+(p-1)\,x+p$ implies the distinctness of the roots of $R(x)$. The distinctness of the roots of $S(x)$ comes from the fact that $S(x)$ and $S'(x)$ are co-prime. With $a=\frac{-p+1}{p(p+1)}\,x+\frac{1}{p}$ and $b=\frac{1}{p(p+1)}\, x^2-\frac{1}{p(p+1)}\, x$, we get $a\, S(x)+b\, S'(x)=1$. Therefore from Bezout $GCD(R(x),R'(x))=1$, which implies the result.
  
  For the second part of the lemma, a classical result due to Enestr\"om and Kakeya (see for example \cite{enestromkakeya}) concerning the bounds of the moduli of the zeros of polynomials having positive real coefficients states that all the complex roots of $R(x)$ have a moduli smaller or equal to $\frac{p-1}{p}$.
\end{proof}
 
%
%

\section{Eigen values of $O=DM$}\label{a:eigenZ}

$M$ is a {\em companion matrix}, its characteristic polynomial is
$$x^p-\sum \limits_{k=0}^{p-1} \frac{1}{p}\, x^k = (x-1)\, R(x)$$
with $R(x)=x^{p-1} + \frac{p-1}{p}\, x^{p-2} + \dots + \frac{2}{p}\, x+ \frac{1}{p}$. From Lemma \ref{lemma:roots} we know that $R(x)$ has $p-1$ distinct roots $\lambda_1,\dots,\lambda_{p-1}$, all comprised between $\frac{1}{p}$ and $\frac{p-1}{p}$. The set of eigenvalues of $M$ is thus $M_\lambda=\{1,\lambda_1,\dots,\lambda_{p-1}\}$. In this section, we prove that the set of eigenvalues of the matrix $O=DM$ is $DM_\lambda = \{0,\lambda_1,\dots,\lambda_{p-1}\}$.

Let $v_0,\dots,v_{p-1}$ be non null eigenvectors respectively associated to the eigenvalues $1,\lambda_1,\dots,\lambda_{p-1}$ of $M$. The case $v_1$ is particular and allows to conclude that $0$ is an eigenvalue of $DM$. The other eigenvectors of $M$ lead to the conclusion that $DM$ also admits the eigenvalues $\lambda_1,\dots,\lambda_{p-1}$.

\begin{itemize}
  \item $DM \, v_0 = D \,v_0$ since the associated eigenvalue is $1$, and $D \, v_0 = \0$ because the eigenspace associated to the eigenvalue 1 is the hyperplan of uniform vectors. As a consequence, 0 is an eigenvalue of $DM$.
  \item For the other eigenvectors, that is, for $1 \leq i \leq p-1$, let $c_i$ be the uniform vector with all its components equal to $\frac{1}{p}\sum_{k=0}^{p-1} v_{i_k}$, with $v_{i_k}$ the $k^{th}$ component of the vector $v_i$. We have $v_i - c_i \neq \0$, and
  $$\begin{array}{rcl}
  DM \, (v_i - c_i) & = & D \, (M \, v_i - M \, c_i)\\
  & = & D \, (\lambda_i \, v_i - c_i)\\
  & = & \lambda_i \, D \, v_i - D \, c_i\\
  & = & \lambda_i (v_i - c_i) - \0
  \end{array}$$
  where the last equality is obtained from the fact that by definition of $D$ we have $D \, v_i = v_i - c_i$. As a consequence, $\lambda_i$ is an eigenvalue of $DM$.
  \end{itemize}
  
  Finally, $DM_\lambda = \{ 0,\lambda_1,\dots,\lambda_p-1 \}$ and the spectral radius of $DM$ is smaller or equal to $\frac{p-1}{p}$.
   
%
%
   
\section{From uniform vector $Y_n$ to wave pattern}\label{a:yn2wave}
   
\setcounter{tmp}{\value{lemma}}
\setcounter{lemma}{\value{lemmayn2wave}}
\begin{lemma}
\setcounter{lemma}{\value{tmp}}
  $Y_n$ is a uniform vector of $\Z^p$ implies
  $$\pi(N)_{[n,\infty[} \in \Big( 0 + ( p \cdot p\!-\!1 \cdot \ldots \cdot 1 ) \Big) ^* 0^\omega$$
\end{lemma}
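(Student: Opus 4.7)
The plan is to invoke Remark \ref{remark:determined2} iteratively, starting from the given uniform $Y_n$. If all components of $Y_n$ equal some integer $m$, then the mean $m_n = m$ is itself an integer, so Remark \ref{remark:determined2} forces $b_n \in \{0, p\}$. This dichotomy matches the two outgoing arrows of the automaton in the sketch: the self-loop labelled $0$ and the ``wave'' branch that must begin with $p$.

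In the branch $b_n = 0$, the update rule gives $Y_{n+1} = (m, m, \ldots, m)$, still uniform with value $m$; a single letter $0$ has been emitted and we have returned to a uniform state. In the branch $b_n = p$, the last component of $Y_{n+1}$ equals $m_n + 1 = m+1$, so $Y_{n+1} = (m, \ldots, m, m+1)$. I then show by induction on $k \in \{1, \ldots, p-1\}$ that $Y_{n+k}$ has exactly $k$ trailing copies of $m+1$ and $p-k$ leading copies of $m$, and that $b_{n+k} = p - k$. The mean of such a vector is $m_{n+k} = m + k/p$, which is not an integer for $0 < k < p$, so Remark \ref{remark:determined2} determines $b_{n+k} = p\bigl(\lceil m_{n+k}\rceil - m_{n+k}\bigr) = p - k$; the new last component is then $m + k/p + (p-k)/p = m+1$, and shifting the previous vector up one slot preserves the inductive shape. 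At $k = p$ the vector $Y_{n+p} = (m+1, \ldots, m+1)$ is uniform again, and the block of slopes just emitted is exactly $p \cdot (p-1) \cdots 1$.

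Hence starting from any uniform $Y_n$, every subsequent transition is forced, and the emitted slope sequence is a concatenation of letters $0$ and blocks $p \cdot (p-1) \cdots 1$, punctuated by uniform-vector states. This proves membership in $\bigl(0 + (p \cdot (p-1) \cdots 1)\bigr)^{*}$. Because $\pi(N)$ has finite support, the shot vector $(a_i)$ is eventually zero, so $Y_i = \0$ for all sufficiently large $i$; from that point the uniform value is $0$ and the only consistent continuation is the self-loop on $0$, giving the $0^\omega$ tail. The only mildly non-routine step is the induction tracking $Y_{n+k}$ in the wave branch; once that is done, the rest is bookkeeping on the two-state automaton whose transitions are entirely determined by Remark \ref{remark:determined2}.
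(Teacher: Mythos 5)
Your proof is correct and follows essentially the same route as the paper's: starting from a uniform $Y_n$, Remark \ref{remark:determined2} forces $b_n\in\{0,p\}$, the $0$ branch returns immediately to a uniform vector, and the $p$ branch is tracked by the same induction showing $Y_{n+k}$ has $k$ trailing components equal to $m+1$ with $b_{n+k}=p-k$, closing the wave at $Y_{n+p}$ uniform. Your explicit justification of the $0^\omega$ tail via the finite support is slightly more careful than the paper's, but the argument is otherwise identical.
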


\begin{proof}
  We will see that the sequence $(b_i)_{i \geq n}$ is determined, or more accurately nearly determined, from this index $n$ for which $Y_n$ is a uniform vector. We will see that if $Y_n$ is a uniform vector, then the value of $b_n$ is 0 or $p$. If it is $0$ then $Y_{n+1}$ is again a uniform vector; if it is $p$, then the sequence $(b_i)_{n \leq i < n+p}$ is determined to be equal to $(p,p-1,\dots,1)$, and $Y_{n+p}$ is once more a uniform vector. Those patterns are thus repeated until the end of the configuration (the $0^\omega$), hence the result.
   
  We concentrate on the sequence of values of $Y_i$. The fact that its components are integers, and especially the last one, will play a crucial role in the determination of the value of $b_i$ because $0 \leq b_i \leq p$ (let us recall that the sequence $b_i$ is the sequence of slopes of the fixed point with $N$ grains, {\em i.e.}, $b_i=\pi(N)_i$).

  We start from the hypothesis that $Y_n=\,^t(\alpha,\dots,\alpha)$, thus from the averaging system's equation (\ref{eq:averaging}) we have $Y_{n+1}=\,^t(\alpha,\dots,\alpha,\alpha+\frac{b_n}{p})$. Since $Y_{n+1}$ is an integer vector and $\alpha$ is an integer, $b_n$ equals 0 or $p$.
  \begin{itemize}
    \item If $b_n=0$ then $Y_{n+1}=\,^t(\alpha,\dots,\alpha)$ and we are back to the same situation, the dilemma goes on: the value of $b_{n+1}$ is not determined, it can be $0$ or $p$.
    \item If $b_n=p$ then $Y_{n+k+1}=\,^t(\alpha,\dots,\alpha,\alpha+1)$ from the relation above. A regular pattern then emerges:
    \begin{itemize} 
      \item if $Y_{n+1}=\,^t(\alpha,\dots,\alpha,\alpha+1)$, then $Y_{n+2}=\,^t(\alpha,\dots,\alpha,\alpha+1,\frac{p\alpha+1+b_{n+1}}{p})$ and it determines $b_{n+1}=p-1$ so that $Y_{n+2}=(\alpha,\dots,\alpha,\alpha+1,\alpha+1)$ is an integer vector;
      \item if $Y_{n+2}=\,^t(\alpha,\dots,\alpha,\alpha+1,\alpha+1)$, then $Y_{n+2}=\,^t(\alpha,\dots,\alpha,\alpha+1,\frac{p\alpha+2+b_{n+2}}{p})$ and it determines $b_{n+2}=p-2$ so that $Y_{n+3}=\,^t(\alpha,\dots,\alpha,\alpha+1,\alpha+1,\alpha+1)$ is an integer vector;
      \item {\em et cetera} we have $b_{n+i}=p-i$ for $0 \leq i < p$, and eventually $Y_{n+p}=\,^t(\alpha+1,\dots,\alpha+1)$ is a uniform vector (note that $Y_0$ has a negative mean, hence $\alpha$ is negative, which is consistent with the $\alpha+1$ we obtain).
    \end{itemize}
  \end{itemize}
  
  Let us conclude with an illustration. The grey node represents a uniform $Y_n$, and arrows are labeled by values of the slope, thus paths starting from the grey node represent possible sequences $(b_i)_{n \leq i}$.
  \begin{center} \includegraphics{fig-automata.pdf} \end{center}
   When $Y_n$ is uniform we are in the grey node, then either:
   \begin{itemize}
     \item $b_n=0$, in which case we are back in a situation where $Y_{n+1}$ is uniform ;
     \item or $b_n=p$, in which case $b_{n+1}=p-1,\dots,b_{n+p-1}=1$ and we are back in a case where $Y_{n+p}$ is a uniform vector.
   \end{itemize}
\end{proof}
   
%
%

\begin{landscape}

\section{$\pi(2000)$ for $p=4$}\label{a:2000}

Figure \ref{fig:2000} presents some representations of $\pi(2000)$ for $p=4$ used in the developments of the paper: differences of the shot vector (Figure \ref{fig:2000-dsv}), shot vector (Figure \ref{fig:2000-sv}) and height (Figure \ref{fig:2000-h}).
$$\pi(2000)=(4,0,4,1,3,2,4,1,1,3,4,3,4,2,0,1,4,2,2,1,4,3,2,1,0,4,3,2,1,4,3,2,1,4,3,2,1,4,3,2,1,0^\omega)$$
We can notice on Figure \ref{fig:2000-dsv} that the shot vector differences contract towards some ``{\em steps}'' of length $p$, which corresponds to the statement of Lemma \ref{lemma:average} that the vector $Y_n$ becomes uniform exponentially quickly (note that this graphic plots the opposite of the values of the components of $Y_n$). The shot vector representation on Figure \ref{fig:2000-sv} corresponds to the values of the components of $X_n$, which we did not manage to tackle with classical methods. Figure \ref{fig:2000-h} pictures the sandpile configuration on which the wavy shape appears starting from column 20.

\begin{figure}[!h]
  \centering
  \subfloat[$\pi(2000)$ for $p=4$ represented as a sequence of shot vector differences.]{\label{fig:2000-dsv}\includegraphics[scale=0.75]{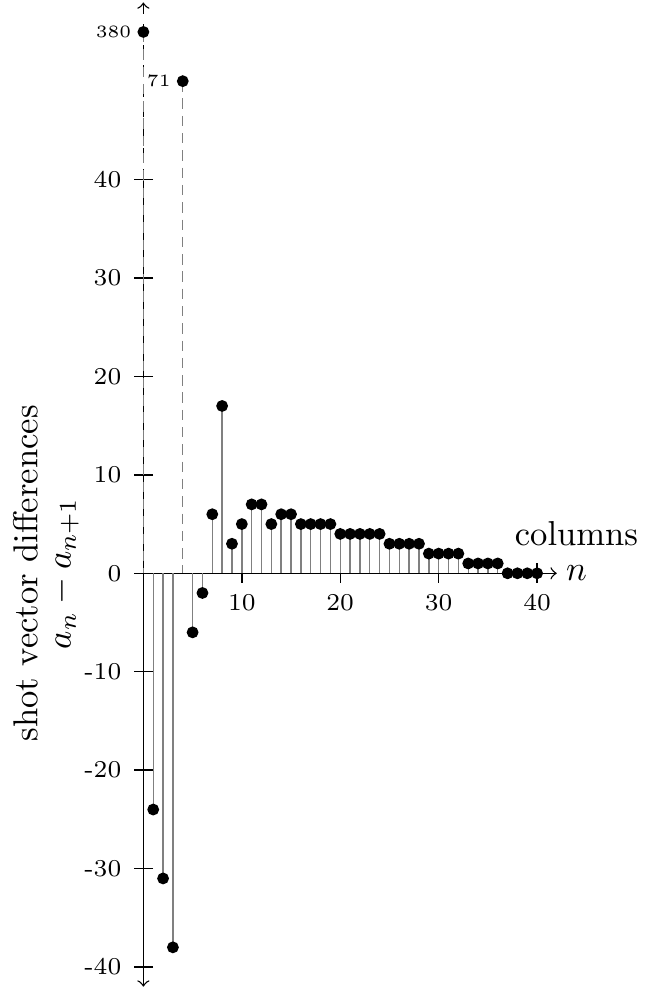}}
  \hspace{1cm}              
  \subfloat[$\pi(2000)$ for $p=4$ represented by its shot vector.]{\label{fig:2000-sv}\includegraphics[scale=0.75]{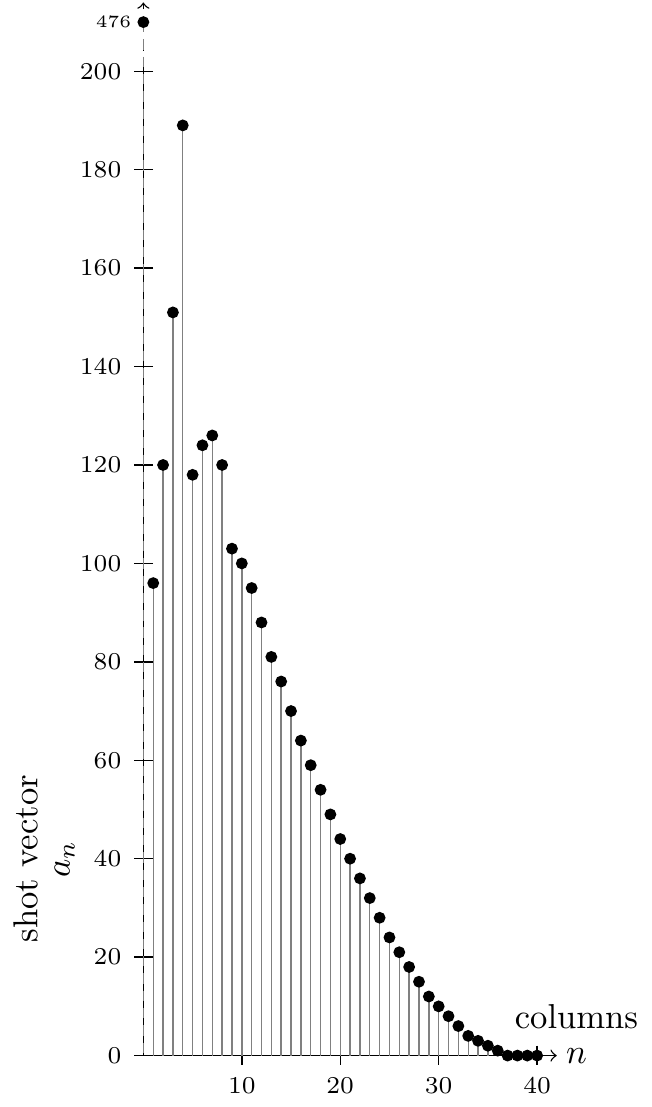}}
  \hspace{1cm}
  \subfloat[$\pi(2000)$ for $p=4$ represented as stacked sand grains.]{\label{fig:2000-h}\includegraphics[scale=0.75]{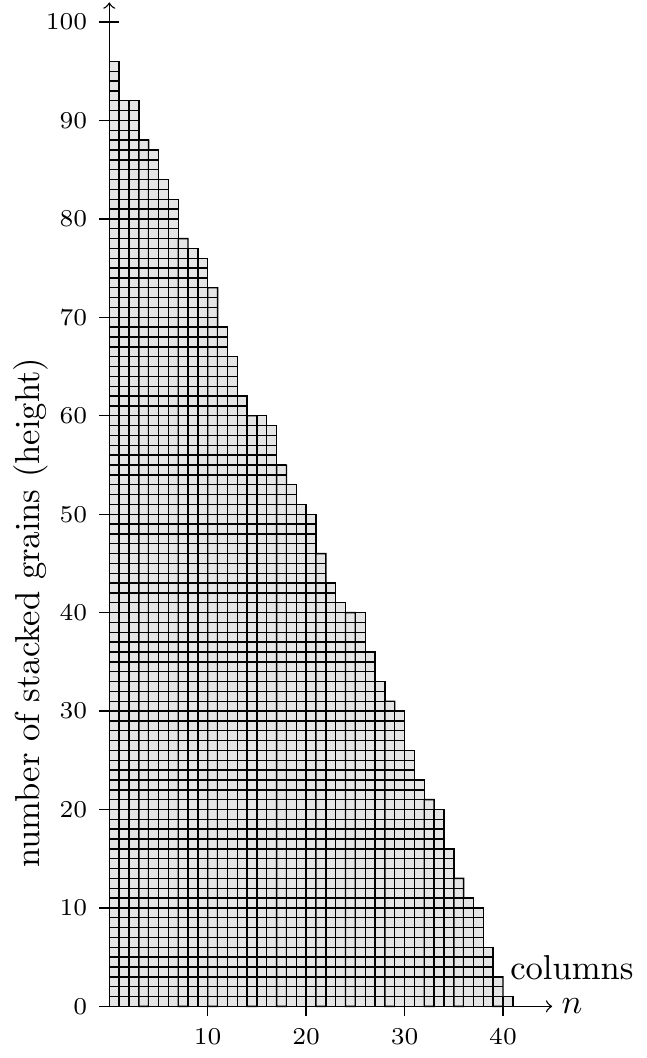}}
  \caption{Representations of $\pi(2000)$ for $p=4$.}
  \label{fig:2000}
  \vspace{-3cm} 
\end{figure}

\end{landscape}

%
%

\section{Global density column in $\O(\log N)$}\label{a:gdl}

In this section we introduce an important property on avalanches, which, when it is verified starting from an index $n$, leads to regularities in the avalanche process beyond column $n$ (see \cite{LATA} and \cite{MFCS} for details). We will see that a Corollary of Proposition \ref{lemma:average} and Lemma \ref{lemma:yn2wave} is that this property is verified on the $k^{th}$ avalanche\footnote{avalanches are formally defined in Appendix \ref{a:hourglass}.} starting from an index in $\O(\log k)$.

We say that there is a {\em hole} at position $i$ in an avalanche $s^k$ if and only if $i \notin s^k$ and $(i\!+\!1) \in s^k$. An interesting property of an avalanche is the absence of hole from an index $l$, which tells that,
$$\text{there exists an } m \text{ such that }\begin{array}[t]\{{l}. \text{for all } i \text{ with } l \leq i \leq m, \text{ we have } i \in s^k\\ \text{for all } i \text{ with } m < i, \text{ we have } i \notin s^k\end{array}$$
namely, from column $l$, a set of consecutive columns is fired, and nothing else. We say that an avalanche $s^k$ is {\em dense starting from} an index $l$ when $s^k$ contains no hole $i$ with $i \geq l$. We have already explained in \cite{LATA} that this property induces a kind of ``pseudo linearity'' on avalanches, that it somehow ``breaks'' the criticality of avalanche's behavior and let them flow smoothly along the sandpile. Let us introduce a formal definition:

\begin{definition}
  $\mathcal L'(p,k)$ is the minimal column such that the $k^{th}$ avalanche is dense starting at $\mathcal L'(p,k)$:
  $$\mathcal L'(p,k)=\min \{ l \in \N ~|~ \exists m \in \N \text{ such that } \forall l \leq i \leq m, i \in s^k \text{ and } \forall i > m, i \notin s^k \}$$
  Then, the global density column $\mathcal L(p,N)$ is defined as:
  $$\mathcal L(p,N)=\max \{\mathcal L'(p,k) ~|~ k \leq N \}$$ 
\end{definition}

The global density column $\mathcal L(p,N)$ is the smallest column number starting from which the $N$ first avalanches are dense (contain no hole). A Corollary of Proposition \ref{lemma:average} and Lemma \ref{lemma:yn2wave}, conjectured in \cite{LATA} and proven only for $p=2$, is:

\begin{corollary}\label{corollary:snowball}
For all parameter $p$, $\mathcal L(p,N)$ is in $\O(\log N)$.
\end{corollary}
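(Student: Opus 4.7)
The plan is to combine Proposition~\ref{lemma:average} and Lemma~\ref{lemma:yn2wave}, applied to successive fixed points, with a direct combinatorial analysis of how an avalanche can interact with a wavy configuration. Since $\mathcal{L}(p,N)=\max_{k\leq N}\mathcal{L}'(p,k)$, it suffices to show $\mathcal{L}'(p,k)\in\O(\log k)\subseteq\O(\log N)$ for each $k\leq N$. For each such $k$, I apply Proposition~\ref{lemma:average} and Lemma~\ref{lemma:yn2wave} separately to $\pi(k-1)$ and to $\pi(k)$, and choose $\tilde n(k)\in\O(\log k)$ large enough that both $\pi(k-1)_{[\tilde n(k),\infty[}$ and $\pi(k)_{[\tilde n(k),\infty[}$ lie in the wavy regex $(0+p\cdot (p-1)\cdots 1)^*\,0^\omega$. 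It then remains to prove that the $k$-th avalanche $s^k$ has no hole at any column $\geq \tilde n(k)$, which gives $\mathcal{L}'(p,k)\leq \tilde n(k)$.

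The main tool is the slope-change identity obtained by summing the transition rule over $s^k$ (using that each column is fired at most once, a prior known property of $s^k$): writing $x_j=\mathbf{1}[j\in s^k]$, one has for every $j\geq 1$
$$\pi(k)_j - \pi(k-1)_j \;=\; p\,x_{j+1} \;-\; (p+1)\,x_j \;+\; x_{j-p}.$$
For $j\geq\tilde n(k)$ both $\pi(k-1)_j$ and $\pi(k)_j$ lie in $\{0,1,\ldots,p\}$ and are further pinned by the regex. I then argue by contradiction: suppose $i\geq \tilde n(k)$ is a hole, i.e.\ $x_i=0$ and $x_{i+1}=1$. Evaluating the identity at $j=i$ immediately forces $\pi(k-1)_i=0$, $\pi(k)_i=p$ and $x_{i-p}=0$; the regex then demands that a wave start at $i+1$ in $\pi(k-1)$ (so $\pi(k-1)_{i+1}=p$) and at $i$ in $\pi(k)$.

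Plugging these constraints into the identity at $j=i+1,\ldots,i+p$ inductively yields $x_{i+2}=x_{i+3}=\cdots=x_{i+p+1}=1$ and $\pi(k)_{[i,i+p]}=p,p-1,\ldots,1,0$: the cascade fires every column of the current wave and shifts it one position to the left, leaving an isolated zero at $i+p$. Iterating across the next gap (either an isolated zero or the immediate start of the next wave of $\pi(k-1)$), the forcing recurs, so the cascade must fire every wave it meets. Since $s^k$ is finite it must stop at some rightmost column $m_R$; the identity at $j=m_R$ forces $\pi(k-1)_{m_R}=p$ (so that $m_R$ starts a wave of $\pi(k-1)$) and $x_{m_R-p}=1$, while the regex simultaneously imposes $\pi(k-1)_{m_R+1}=p-1$ for $p\geq 2$, which conflicts with the value the identity produces at $j=m_R+1$ given $x_{m_R+1}=x_{m_R+2}=0$. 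The case $p=1$ is the well-known classical sandpile model already addressed elsewhere. Passing to the maximum over $k\leq N$ then yields $\mathcal{L}(p,N)\in\O(\log N)$. The expected main obstacle is the rigorous execution of this inductive cascade, where one must track two sub-cases at every wave boundary (isolated zero vs.\ immediate next wave) and a slightly separate endpoint analysis at $m_R$; each individual step is nevertheless a direct arithmetic check from the slope-change identity and the tight range and regex constraints, requiring no new DDS-level machinery beyond Proposition~\ref{lemma:average} and Lemma~\ref{lemma:yn2wave}.
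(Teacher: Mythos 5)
Your reduction to bounding $\mathcal L'(p,k)$ for each $k\leq N$, the slope-change identity $\pi(k)_j-\pi(k-1)_j=p\,x_{j+1}-(p+1)\,x_j+x_{j-p}$, and the first forcing step at a putative hole $i$ (giving $x_{i-p}=0$, $\pi(k-1)_i=0$, $\pi(k)_i=p$, then $\pi(k-1)_{i+1}=p$, $x_{i+2}=1$, and so on along the first wave) are all correct. The gap is in the claims that ``the forcing recurs'' across wave boundaries and that the endpoint analysis at $m_R$ produces a conflict: the static constraints you permit yourself (the identity, $0\leq b_j\leq p$, both regexes, each column fired at most once) are in fact \emph{satisfiable} in the presence of a hole, so no purely arithmetic contradiction can be extracted from them. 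Concretely, for $p=2$ take $\pi(k-1)_{[i,\infty[}=0\cdot 2\cdot 1\cdot 2\cdot 1\cdot 0\cdots$ (two adjacent waves after the zero), $x_{i-2}=x_{i-1}=x_i=0$, $x_{i+1}=x_{i+2}=x_{i+3}=1$, and $x_j=0$ for $j\geq i+4$: the identity then yields $\pi(k)_{[i,\infty[}=2\cdot 1\cdot 0\cdot 0\cdot 2\cdot 1\cdots$, which satisfies every constraint on your list --- including your endpoint conditions $\pi(k-1)_{m_R}=p$ and $x_{m_R-p}=1$ at $m_R=i+3$, and the regex for $\pi(k)$, since two consecutive zeros are allowed by Lemma \ref{lemma:yn2wave} --- yet has a hole at $i$. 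The branch you miss is exactly the one where two waves of $\pi(k-1)$ are adjacent: the identity at $j=i+p+1$ reads $\pi(k)_{i+p+1}=p\,x_{i+p+2}$, which permits $x_{i+p+2}=0$ (the avalanche stops and a $0$ is inserted into $\pi(k)$) just as well as $x_{i+p+2}=1$.

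The underlying reason is that the slope-change identity is only a necessary condition on the set of fired columns; it erases the causal ordering of the avalanche. Ruling out the configuration above requires a dynamical argument: with $x_{i-p}=\dots=x_i=0$ and $x_{i+p+2}=0$, column $i+l$ for $1\leq l\leq p$ has initial slope $p-l+1\leq p$ and can only become firable after $i+l+1$ fires, while $i+p+1$ can only become firable after $i+1$ fires, a cyclic dependency. This causal reasoning (a column with slope strictly between $0$ and $p$ cannot fire before its right neighbour, and fires if its right neighbour does), together with the result of \cite{LATA} that $p$ consecutively fired columns force the avalanche to be dense, is precisely what the paper's proof uses: it shows that the first full wave beyond $n+c$ is fired as a block of $p+1$ consecutive columns, rather than refuting holes by arithmetic on the identity. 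So your argument needs exactly the ``DDS-level'' ingredient you claim to avoid; without it the inductive cascade does not close.
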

\begin{proof}
  Let $p$ be a fixed parameter. According to Lemma \ref{lemma:plateau} of Appendix \ref{a:plateau} there is no sequence of more than $p+1$ symbols $0$. Consequently, from Proposition \ref{lemma:average} and Lemma \ref{lemma:yn2wave}, there exists $n$ in $\O(\log N)$ and $c \leq p\!+\!1$ such that
  $$\pi(N)_{[n+c,\infty[} \in ( p \cdot p\!-\!1 \cdot \ldots \cdot 1 ) \Big(0 + ( p \cdot p\!-\!1 \cdot \ldots \cdot 1 ) \Big)^* 0^\omega$$
  We will prove that the avalanche $s^{N+1}$ is dense starting from $n+c$, using a result of \cite{LATA} stating that as soon as $p$ consecutive columns has been fired, the avalanche is dense. Since the same reasoning for all $N$, it holds that $\mathcal L'(p,k+1)$ is also in $\O(\log N)$ for all $k<N$ which completes the proof.
  
  Let us consider the following case disjunction, according to the value of the maximal index fired within the $N\!+\!1^{th}$ avalanche, denoted $\max s^{N+1}$.
  \begin{itemize}
    \item If the $N\!+\!1^{th}$ avalanche ends before column $n+c$, formally if $\max s^{N+1} \leq n+c$, then obviously $\mathcal L'(p,N+1) \leq n+c$.
    \item If the $N\!+\!1^{th}$ avalanche ends beyond column $n+c$, formally if $\max s^{N+1} > n+c$, then we consider the dynamic of the avalanche process $s^{N+1}$ on columns greater than $n$, and prove that the set of $p+1$ consecutive columns $n+c$ to $n+c+p$ are all fired. Firstly, from the locality of the rule (it involves columns at distance at most $p$), the stability of $\pi(N)$, and the fact that each column is fired at most once (see \cite{LATA} for a proof), a column can't be fired if none of the $p$ preceding columns has been fired. Secondly, a column $i$ such that $0 < \pi(N)_i < p$ cannot be fired unless its successor $i+1$ has been fired, and conversely will eventually be fired if $i+1$ is. A consequence of the shape of $\pi(N)_{[n+c,\infty[}$ described above is therefore that $n+c$ and $n+c+p$ are fired, and furthermore with $\pi(N)_{n+c+p}=p$ {\em i.e.}, the two first waves are not separated by any symbol $0$ (column $n+c+p$ is fired after it receives one unit of slope when $n+c$ is fired). A simple induction finally shows that columns $n+c+p-1, n+c+p-2,\dots,n+c+1$ are fired, which completes the proof.
  \end{itemize}
\end{proof}

\end{document}